\documentclass[journal]{IEEEtran}

\usepackage{graphicx} 
\usepackage[noadjust]{cite}                               
\usepackage{amsmath}                    
\usepackage{amssymb}
\usepackage{amsthm}
\usepackage{subcaption}

\newtheorem{theorem}{Theorem}
\newtheorem{remark}{Remark}

\newtheorem{lemma}{Lemma}
\newtheoremstyle{noparens}%
  {}{}%
  {\itshape}{}%
  {\bfseries}{:}
  { }%
  {\thmname{#1}\thmnumber{ #2}\mdseries\thmnote{ #3}}

\theoremstyle{noparens}
\newtheorem{lemmaNoParens}[lemma]{Lemma}

\usepackage{multicol}
\usepackage{multirow}
\usepackage[english]{babel}   
\usepackage{color}
\usepackage[T1]{fontenc}
\usepackage{soul}
\usepackage{epstopdf}

\DeclareMathOperator{\sat}{sat}
\DeclareMathOperator{\sign}{sign}
\DeclareMathOperator{\sig}{sig}

\title{Boundary-layer control with unstructured uncertainties with application to adaptive autopilots}
\author{Peng Li, Di Liu~\IEEEmembership{Member,~IEEE}, 
and Simone Baldi~\IEEEmembership{Senior member,~IEEE}\vspace{-0.2cm} 
\thanks{This research was partly supported by the National Key R\&D Program of China grant 2022YFE0198700, by the EU Horizon 2020 R\&I programme Marie Sklodowska-Curie grant 899987, and by the Natural Science Foundation of China grant 62073074 \emph{(corresponding authors: S. Baldi and D. Liu)}
\newline
  P. Li is with School of Cyber Science and Engineering, Southeast University, China (email: {\tt\small lpeng\_2013@163.com}) \newline  
D. Liu is with School of Computation, Information and Technology, Technical University of Munich, Germany 
(e-mail: di.liu@tum.de).
\newline  
S. Baldi is with Frontier Center for Mobile Information Communication and Security, Southeast University, China (email: {\tt\small s.baldi@tudelft.nl})  
}
}

\begin{document}
\maketitle

\begin{abstract}
    Control with unstructured uncertainties refers to controlling systems where not only the parameters are unknown, but also the way the parameters appear in the dynamics. This problem {\color{black}becomes pivotal in autopilots, where a unified control architecture is sought 
    for aerial/ground/marine vehicles with different structures}. 
    {\color{black}By only making use of basic Euler-Lagrange properties valid in most mechanical systems independently of their specific structure, this brief proposes an adaptive design that does not rely on structural knowledge of the uncertainties. The proposed adaptive method, here validated in the ArduPlane module of ArduPilot, applies also to other modules like ArduCopter, ArduRover, ArduSub.} Enhanced performance with respect to state-of-the-art methods addressing unstructured and {\color{black}state-dependent} uncertainties is verified.\vspace{-0.1cm} 
\end{abstract}
\begin{IEEEkeywords}
Unstructured uncertainty, {\color{black}state-dependent uncertainty}, unmanned vehicles, autopilot, adaptive control
\end{IEEEkeywords}

\section{Introduction}
{\color{black}The most advanced autopilot suites aim at unified control architectures that can tackle unmanned vehicles with different structures, no matter if the vehicle is}  {\color{black}aerial} \cite{8693689,9703092,9589106}, ground \cite{9629337, VONELLEN102750}, 
 or marine  \cite{2006Adaptive} (cf. the ArduPlane, ArduCopter, ArduRover, ArduSub modules in ArduPilot, or similar modules in other autopilots).   {\color{black}Such autopilot suites are the paradigmatic illustration of control with unstructured uncertainties}, since {\color{black} parametric perturbations and unmodeled dynamics \cite{8438315,8304792} are not only stemming from measurement noises and environmental disturbances \cite{9570132, 9325954}, but also  from the different structures of aerial/ground/marine vehicles.} 
 {\color{black}Imposing structural assumptions is not viable}  {\color{black}since  mismatches between a priori assumptions on the uncertainty and reality may fool the control law, degrade performance, and lose stability \cite{https://doi.org/10.1002/rnc.3562}.}  {\color{black} Avoiding structural assumptions plays a key role in such autopilots.}
 
 {\color{black}As unmanned vehicles are mechanical systems, a convenient perspective to autopilot design is the control of uncertain Euler-Lagrange dynamics.} 
These control methods can be roughly categorized according to robust methods and adaptive methods. Common robust methods, based on  observers \cite{9369849}, passivity \cite{7859476}, or sliding mode control (SMC) \cite{9570130, 9512516}, eventually rely on some prior knowledge of the uncertainty, making it difficult to cope with uncertainties going beyond the expected bounds during system operation \cite{https://doi.org/10.1002/rnc.3562}. 
Even adaptive methods are not free of prior knowledge of the uncertainty, appearing in the form of structural conditions (e.g. linear-in-parameter, matching conditions \cite{8435993,9721593}), or of constant bounds for the uncertainty and its time derivative \cite{9731672,doi:10.1080/00207179.2010.501385,2016Adaptive,2020Adaptive}. 

{\color{black}As structural uncertainty is intrinsically state-dependent,} 
 imposing constant bounds restrictively amounts to assuming 
bounded states before proving stability \cite{2020Ona, 9715173}. Recent years have seen progress in the control of Euler-Lagrange dynamics with reduced structural knowledge, but  
{\color{black} not all these methods can be seamlessly embedded in autopilots to fit the needs of their control architecture: e.g., several structure-free SMC methods do not allow integral action needed in autopilot loops \cite{9001188, 9525201,9795675}, while structure-free methods based on prescribed performance or funnel functions \cite{9030217,9165001} pose the problem of large inputs.} {\color{black}Finite-time stability is sought in autopilots to enhance robustness, but available finite-time methods typically rely on  structural conditions on the uncertainty 
\cite{7331625, 9492051, 9271817,doi:10.1080/00207179.2016.1138241}.}

{\color{black}As a result, a systematic structure-free way to enhance robustness and adaptation of unified autopilot architectures is a notable open issue. This work contributes in this sense by:} 
\begin{itemize}
    \item[a)] Proposing a novel sliding surface that embeds a non-singular finite-time term and an integral term, useful to incorporate control law into existing autopilot loops. 
    \item[b)] Removing structural knowledge of the uncertainty. 
   Only basic Euler-Lagrange properties are used, valid for  mechanical systems independently on their specific structure. 
    \item[c)] 
    Proposing an adaptive law in the framework of finite-time control. The law tackles unstructured uncertainty by estimating a state-dependent bound of the uncertainty, whose {\color{black}form is independent on the specific system structure}.  
\end{itemize}

Section II formulates the control with unstructured uncertainty. Section III covers the design and analysis. {\color{black}Software-in-the-loop} validations are in Section IV, with comparisons to the existing 
ArduPilot 
and to an adaptive SMC autopilot. {\color{black}Additional comparisons are in the report \cite{TechReportandCode}.} 

\textbf{Notations}: Let $\mathbb{R}$ and $\mathbb{R}^{n\times m}$ be the sets of real numbers and real $n \times m$ matrices, and $\mathbb I_n$ be the  $n\times n$ identity matrix. Vectors are denoted with bold, such as $\mathbf{x} = [{x}_1, \cdots, {x}_n]^T$. {\color{black}Let us use the short notations $\sign(\mathbf{x})=[\sign(x_1), \cdots, \sign(x_n)]^T$,} 
$\sig^v(\mathbf{x})=[|x_1|^v \sign(x_1), \cdots, |x_n|^v \sign(x_n)]^T$. Let $\|\cdot\|$ and $\lambda_{min}(\cdot)$ denote the Euclidean norm and minimum eigenvalue.

\section{Preliminaries and Problem Formulation}
Consider the following Euler-Lagrange (EL) dynamics
\begin{align}
    \mathbf M( \mathbf q)\ddot{ \mathbf q}+ \mathbf C(\mathbf q,\dot{\mathbf q})\dot{\mathbf q}+ \mathbf G(\mathbf q)+  \mathbf F(\dot{\mathbf q})+ \mathbf d = \mathbf u, \label{EL}
\end{align}
with $\mathbf q, \dot{\mathbf{q}}, \ddot{\mathbf{q}} \in \mathbb{R}^n$ denoting the state and its time derivatives, $\mathbf M(\mathbf q) \in \mathbb{R}^{n \times n}$ the mass/inertia matrix, $\mathbf C(\mathbf q, \dot{\mathbf q}) \in \mathbb{R}^{n \times n}$ the Coriolis matrix, $\mathbf G (\mathbf q) \in \mathbb{R}^n$ the gravity term, $\mathbf F (\dot{\mathbf{q}}) \in \mathbb{R}^n$ the damping/friction term, $\mathbf d \in \mathbb{R}^n$ the external disturbance, and $\mathbf u \in \mathbb{R}^n$ the control input.

The Euler-Lagrange dynamics (\ref{EL}) can describe several real-world mechanical systems, such as robotic manipulators and aerial/ground/marine vehicles \cite{9001188,2020Ona, 9715173}. 
For all such systems, the following standard properties hold: 

\vspace{0.15cm}
\noindent \textbf{Property 1:} There exist $c, g, f, d \in \mathbb{R}^{+}$ such that $||\mathbf C(\mathbf q,\dot{\mathbf q})|| \leq c ||\dot{\mathbf q}||$, $||\mathbf G \mathbf{(q)}|| \leq  g$, $||\mathbf F(\dot{\mathbf q})|| \leq {\color{black}f}||\dot{\mathbf q}||$ and $||\mathbf d(t)|| \leq d$.

\vspace{0.15cm}
\noindent \textbf{Property 2:} The matrix $\mathbf M \mathbf{(q)}$ is symmetric and uniformly positive, that is, $\exists$ $m_1, m_2 \in \mathbb{R}^{+}$ such that
\begin{equation}\label{prop 2}
0 < m_1 \mathbb I_n  \leq \mathbf M \mathbf{(q)} \leq m_2 \mathbb I_n.
\end{equation}

\noindent \textbf{Property 3:} The matrix $\dot {\mathbf M} (\mathbf q) - 2\mathbf C(\mathbf{q}, \dot {\mathbf{q}})$ is skew symmetric i.e. $\mathbf x^{T}(\dot {\mathbf M} (\mathbf q) - 2\mathbf C(\mathbf{q}, \dot {\mathbf{q}})) \mathbf{x}=0$ for any non-zero vector $\mathbf{x}$.

\begin{remark}[Unstructured uncertainty]\label{remark1}
Properties 1-3 do not rely on the {\color{black}specific} structure of the system. 
 {\color{black}The structure of} $\mathbf{M,C,F,G,d}$ in (\ref{EL}) and  {\color{black}the knowledge of}  $m_1, m_2, c, g,f,d$ in their corresponding bounds {\color{black}will be unknown in this work.}
\end{remark}

%
%
\noindent Let us recall {\color{black}and extend} a Lyapunov characterization of practical finite-time stability, to be used for stability analysis:
\begin{lemmaNoParens}[\cite{7959105}] \label{lemmafinitetime}
Consider general nonlinear dynamics
\begin{align} \label{nonlinearmode}
    \dot x = {\color{black}\hbar}(t,x), \quad x(0) = x_0,
\end{align}
with 
${\color{black}\hbar}:\mathbb{R}_{+} \times \mathbb{R}^n \mapsto \mathbb{R}^n$ a continuous function satisfying $g(t,0)=0, \forall t$. Suppose there exists a positive definite and continuous radially unbounded  function $V(x)\hspace{-0.1cm}: \mathbb{R}^n \mapsto \mathbb{R}$ such that $\dot{V}(x) \leq -\eta V^{\color{black}\frac{1+v}{2}}(x)+{\color{black}\omega}$, where $\eta>0$, {\color{black}$0<v\le 1$}, ${\color{black}\omega} < \infty$. Then, the origin of (\ref{nonlinearmode}) is practical finite-time stable and the state $x$ converges in finite time to the region 
{\color{black}
\begin{align*}
    &\hspace{1.0cm}\Psi = \{x| \ V^{\color{black}\frac{1+v}{2}}(x) \leq \frac{{\color{black}\omega}}{(1-\iota)\eta} \},  \quad 0<\iota<1     
\end{align*}
where an estimate for the finite time $T_s$ is \vspace{-0.1cm} 
\begin{align*}
    & \begin{matrix}  T_s \hspace{-0.05cm} =\hspace{-0.05cm} \frac{2}{(1\hspace{-0.03cm}-\hspace{-0.03cm}{\color{black}v})\iota \eta}\hspace{-0.1cm}\left[V^{\hspace{-0.03cm}{\color{black}\frac{1-v}{2}}}(x(0))\hspace{-0.09cm}-\hspace{-0.09cm} \left (\frac{{\color{black}\omega}}{(1\hspace{-0.03cm}-\hspace{-0.03cm}\iota)\eta} \right )^{\color{black}\frac{1-v}{1+v}}\right],  & \hspace{-0.09cm}\text{if} \  0\hspace{-0.05cm}<\hspace{-0.05cm} {\color{black}v} \hspace{-0.05cm}<\hspace{-0.05cm}1,    \end{matrix}
\end{align*}
} {\color{black}and, by using L'Hôpital's rule for $\lim_{v \to 1} T_s$,}
\begin{align*}
     \begin{matrix}  \hspace{-0.1cm}T_s \hspace{-0.05cm}= \frac{1}{\iota \eta}[\ln{V(x(0)){\color{black}\omega}}-\ln{(1-\iota)\eta}], & \hspace{-0.15cm} \qquad \ \text{if} \  {\color{black}v}=1.    \quad   \end{matrix} 
\end{align*}
\end{lemmaNoParens}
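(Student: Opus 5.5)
The plan is to use the standard splitting trick for practical finite-time stability and then integrate a scalar comparison inequality. Fix $\iota\in(0,1)$ and rewrite the hypothesis as
\begin{align*}
\dot V \le -\iota\eta V^{\frac{1+v}{2}} - \Big[(1-\iota)\eta V^{\frac{1+v}{2}} - \omega\Big].
\end{align*}
Whenever $x\notin\Psi$, i.e. $V^{\frac{1+v}{2}} > \frac{\omega}{(1-\iota)\eta}$, the bracket is positive. Hence outside $\Psi$ we have $\dot V \le -\iota\eta V^{\frac{1+v}{2}} < 0$.

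This gives two things. First, $\Psi$ is positively invariant: on its boundary $\dot V<0$ points inward, and since $V$ is radially unbounded, $\Psi$ is compact. Second, along any trajectory starting outside $\Psi$, $V$ is strictly decreasing. Since $\omega<\infty$ and $V$ is bounded by $V(x(0))$ until entry, solutions remain bounded and exist on the relevant interval; continuity of $\hbar$ gives local existence.

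For the settling time with $0<v<1$, set $W=V^{\frac{1-v}{2}}$ while $x\notin\Psi$. Then
\begin{align*}
\dot W=\tfrac{1-v}{2}V^{-\frac{1+v}{2}}\dot V\le -\tfrac{(1-v)\iota\eta}{2}.
\end{align*}
Integrating from $0$ until $V^{\frac{1+v}{2}}$ first reaches $\frac{\omega}{(1-\iota)\eta}$, that is, until $W$ reaches $\big(\frac{\omega}{(1-\iota)\eta}\big)^{\frac{1-v}{1+v}}$, yields the stated bound $T_s$. This uses the comparison lemma, applied to $W$ as an absolutely continuous function; if $V$ is merely continuous, one works with the upper Dini derivative.

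For $v=1$ the inequality is linear, $\dot V\le-\iota\eta V$. Integrating $\frac{d}{dt}\ln V\le -\iota\eta$ until $V=\frac{\omega}{(1-\iota)\eta}$ gives
\begin{align*}
T_s=\frac{1}{\iota\eta}\big[\ln V(x(0))-\ln\tfrac{\omega}{(1-\iota)\eta}\big],
\end{align*}
which is what the statement's formula encodes. Equivalently, this is the limit $v\to1$ of the previous formula, obtained by writing it as $\frac{2}{\iota\eta}\frac{a^{s}-b^{s}}{s}$ with $s=\frac{1-v}{2}\to0$ and applying L'Hôpital's rule. If $x(0)\in\Psi$ already, $T_s=0$ by invariance.

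The main obstacle is regularity rather than the computation. $V$ is only assumed continuous, so $\dot V$ must be read as a Dini derivative, and the comparison argument must be justified in that setting. One must also make sure the trajectory does not exit and re-enter $\Psi$, which invariance settles. The remaining difficulty is bookkeeping: the estimate $T_s$ is only an upper bound on the first hitting time, and the statement's $\ln(V(x(0))\omega)$ notation should be read as the logarithm of the ratio $V(x(0))(1-\iota)\eta/\omega$.
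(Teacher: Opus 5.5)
Your proof is correct. It is the standard splitting-plus-comparison argument behind the cited result, and it reproduces the $0<v<1$ estimate exactly, including the subtracted threshold term. The paper gives no proof of its own here. It imports the lemma from \cite{7959105} and obtains the $v=1$ case only by taking $\lim_{v\to1}$ of the $0<v<1$ settling-time formula via L'H\^opital's rule.

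Your direct treatment of $v=1$ is the sounder route. You integrate $\frac{d}{dt}\ln V\le-\iota\eta$ outside $\Psi$ and use invariance of $\Psi$. By contrast, the hypothesis $\dot V\le-\eta V+\omega$ is not an instance of the $v<1$ hypothesis with the same $\eta,\omega$, since for $V<1$ it is weaker. So a limit of bounds proved for $v<1$ identifies the right value but does not by itself prove the bound.

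Your computation also settles the printed formula. The correct value is $T_s=\frac{1}{\iota\eta}\ln\frac{(1-\iota)\eta\,V(x(0))}{\omega}$. The statement's expression, read literally, is $\frac{1}{\iota\eta}\ln\frac{V(x(0))\,\omega}{(1-\iota)\eta}$. Your reinterpretation is therefore a genuine correction of the statement, not a matter of notation. The $v=1$ formula also tacitly needs $\omega>0$: for $\omega=0$ the set $\Psi=\{0\}$ is reached only asymptotically.

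There is one arithmetic slip in your side remark. With $s=\frac{1-v}{2}$ one has $\frac{2}{1-v}=\frac{1}{s}$. The estimate is therefore $\frac{1}{\iota\eta}\cdot\frac{a^{s}-b^{s}}{s}$, not $\frac{2}{\iota\eta}\cdot\frac{a^{s}-b^{s}}{s}$. Here $b=\bigl(\frac{\omega}{(1-\iota)\eta}\bigr)^{2/(1+v)}$, which depends on $s$ but tends to $\frac{\omega}{(1-\iota)\eta}$. With your prefactor the limit would be twice the $T_s$ you computed directly.
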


\noindent Denote with $\mathbf q_d, \dot{\mathbf{q}}_d, \ddot{\mathbf{q}}_d \in \mathbb{R}^n$ the desired state and its derivatives, which satisfy $\| \dot{\mathbf{q}}_d(t) \| \leq q_m$,   $\|\ddot{\mathbf{q}}_d(t)\| \leq q_{mm}$, $\forall t$. The following control question arises:\vspace{0.15cm}

\noindent\textbf{Problem:} Without structural knowledge of the terms in   (\ref{EL}) and of their bounds as in Remark \ref{remark1},  find a control $\mathbf u(\cdot)$ such that the origin of the EL closed loop is practical finite-time stable with finite-time convergence of the tracking error $\mathbf e = \mathbf q - \mathbf q_d$.

\section{Sliding surface and adaptive control design}
The design is organized as follows: first, a novel sliding surface is proposed; 
then, the uncertainties affecting the error dynamics are analyzed (Sect. \ref{Sec: Uncertainty}), leading to an adaptive law to compensate them (Sect. \ref{Sec: AdativeLaw}). 
The proposed sliding surface is 
\begin{equation} \label{surface}
    \mathbf{s} = \dot{\mathbf e} + \boldsymbol \lambda_1 \mathbf{e} + \boldsymbol \lambda_2\int_0^t \mathbf e(\tau)d\tau + \boldsymbol \lambda_3 \Delta(\mathbf{e}),
\end{equation}
where $\boldsymbol \lambda_k, k = 1, 2, 3$ are positive definite diagonal matrices with {\color{black}entries} $\lambda_{ki}$, $i = 1,2, \cdots, n$, and $\Delta(\mathbf{e})$ is a vector with {\color{black}entries} $\Delta_i (e_i)$  defined as
\begin{align} \label{Delta_i}
    \Delta_i (e_i) = \left \{ \begin{matrix} \sig^{\gamma}(e_i),    & \text{if} \  |e_i| > \varepsilon,   \\ \alpha_1 e_i + \alpha_2 \sign(e_i) e^2_i,  & \text{if} \  |e_i| \le \varepsilon. \end{matrix} \right.
\end{align}
where $0<\gamma<1$, $\varepsilon$ is a small positive constant, and
\begin{align*}
    \alpha_1=(2-\gamma) \varepsilon^{\gamma-1}, \qquad
    \alpha_2=(\gamma-1) \varepsilon^{\gamma-2}.
\end{align*}
The time derivative of $\mathbf s$ 
is well defined in the entire state space 
\begin{equation} \label{s_dot}
    \dot{\mathbf{s}} = \ddot{\mathbf{e}} + \boldsymbol \lambda_1 \dot{\mathbf{e}} + \boldsymbol \lambda_2 \mathbf{e} + \boldsymbol \lambda_3 \dot \Delta(\mathbf e),
\end{equation}
with the {\color{black}entries}  of $\dot \Delta(\mathbf{e})$ being
\begin{align} \label{Delta_i_dot}
    \dot \Delta_i (e_i) = \left \{ \begin{matrix} \gamma |e_i|^{\gamma-1} \dot{e}_i ,   & \text{if} \  |e_i| > \varepsilon   \\ (\alpha_1 + 2\alpha_2 |e_i|) \dot e_i,  &  \hspace{0.08cm} \text{if} \  |e_i| \le \varepsilon. \end{matrix} \right.
\end{align}
Note that the parameters $\alpha_1$, $\alpha_2$ ensure continuity of $\mathbf s$ and $\dot{\mathbf s}$ at the point $|e_i|=\varepsilon$. 
The following lemma provides appropriate bounds on $\Delta(\mathbf{e})$ and $\dot \Delta(\mathbf{e})$, useful for stability analysis:
\vspace{-0.2cm}
\begin{lemmaNoParens} \label{inequa_Delta}
       For $\Delta(\mathbf e)$, $\dot \Delta(\mathbf e)$ in  (\ref{Delta_i}) and (\ref{Delta_i_dot}), the following holds
    \begin{align} \label{bound_Delta}
        \|\Delta(\mathbf{e})\| \le \|\sig^\gamma(\mathbf{e})\|, \quad
        \|\dot \Delta(\mathbf{e})\| \le \alpha_1\|\dot{\mathbf e}\|.
    \end{align}
\end{lemmaNoParens}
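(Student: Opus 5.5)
The plan is to reduce both inequalities in (\ref{bound_Delta}) to scalar, componentwise inequalities, namely $|\Delta_i(e_i)| \le |e_i|^\gamma$ and $|\dot\Delta_i(e_i)| \le \alpha_1 |\dot e_i|$ for each $i$. Since $\|\sig^\gamma(\mathbf e)\|^2=\sum_i |e_i|^{2\gamma}$ and $\|\dot{\mathbf e}\|^2=\sum_i \dot e_i^2$, squaring these scalar bounds and summing over $i$ gives (\ref{bound_Delta}) at once. Each scalar bound is checked separately in the two regimes of (\ref{Delta_i}) and (\ref{Delta_i_dot}).

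\textbf{Regime $|e_i|>\varepsilon$.} Here $\Delta_i(e_i)=\sig^\gamma(e_i)$, so the first bound holds with equality. For the derivative, $\gamma-1<0$ gives $\gamma|e_i|^{\gamma-1} < \gamma\varepsilon^{\gamma-1}$. Since $\gamma<1<2-\gamma$, we also have $\gamma\varepsilon^{\gamma-1} \le (2-\gamma)\varepsilon^{\gamma-1}=\alpha_1$.

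\textbf{Regime $|e_i|\le\varepsilon$.} For the derivative, note that $\alpha_2=(\gamma-1)\varepsilon^{\gamma-2}<0$, so the coefficient $\alpha_1+2\alpha_2|e_i|$ is decreasing in $|e_i|$. Its maximum is $\alpha_1$, attained at $e_i=0$. Its minimum is at $|e_i|=\varepsilon$:
\[
\alpha_1+2\alpha_2\varepsilon=(2-\gamma+2\gamma-2)\varepsilon^{\gamma-1}=\gamma\varepsilon^{\gamma-1}>0 .
\]
Hence the coefficient lies in $(0,\alpha_1]$, and $|\dot\Delta_i|\le\alpha_1|\dot e_i|$.

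For $\Delta_i$ itself, put $t=|e_i|/\varepsilon\in[0,1]$. Then
\[
|\Delta_i| = \varepsilon^\gamma t\bigl[(2-\gamma)-(1-\gamma)t\bigr].
\]
The bracket is at least $1$ on $[0,1]$, so taking absolute values is harmless. The claim $|\Delta_i|\le|e_i|^\gamma=\varepsilon^\gamma t^\gamma$ therefore reduces, for $t>0$, to $k(t)\ge 0$ on $(0,1]$, where
\[
k(t)=t^{\gamma-1}-(2-\gamma)+(1-\gamma)t .
\]
We have $k(1)=0$ and $k'(t)=(1-\gamma)(1-t^{\gamma-2})\le 0$ on $(0,1]$. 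So $k$ is nonincreasing and attains its minimum $0$ at $t=1$, which gives $k\ge 0$. The case $t=0$ is trivial.

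The only step needing an idea is this last scalar inequality, i.e. showing that the quadratic interpolant stays below $|e_i|^\gamma$ inside the boundary layer. The approach is to normalize by $\varepsilon$, divide by $t$, and use monotonicity of $k$. Everything else is direct sign bookkeeping with $\alpha_2<0$ and $0<\gamma<1$.
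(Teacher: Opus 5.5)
Your proof is correct and follows essentially the paper's route. Both reduce to componentwise bounds. For the first, the paper fixes $|e_i|$ and shows that $(2-\gamma)\varepsilon^{\gamma-1}|e_i|+(\gamma-1)\varepsilon^{\gamma-2}|e_i|^2$ is decreasing in $\varepsilon\ge|e_i|$, with equality $|e_i|^\gamma$ at $\varepsilon=|e_i|$; this is the same one-variable monotonicity argument as your $k\ge 0$ under $t=|e_i|/\varepsilon$. For the second, the paper uses monotonicity of the piecewise coefficient, with maximum $\alpha_1$ at $e_i=0$. Your explicit check that $\alpha_1+2\alpha_2|e_i|\ge\gamma\varepsilon^{\gamma-1}>0$ makes precise a positivity step the paper uses only implicitly when passing to absolute values.
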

\begin{proof} To prove the first bound in (\ref{bound_Delta}), substitute $\alpha_1$, $\alpha_2$ into (\ref{Delta_i}) and define the function ${\color{black}h}(|e_i|, \varepsilon)= (2-\gamma)\varepsilon^{\gamma-1}|e_i|+(\gamma-1)\varepsilon^{\gamma-2}|e_i|^2,  \varepsilon \ge |e_i|$. We have 
\begin{align*}
    \frac{\partial {\color{black}h}(|e_i|,\varepsilon)}{\partial \varepsilon} = (\gamma-1)(\gamma-2)|e_i|\varepsilon^{\gamma-3}(|e_i|-\varepsilon),
\end{align*}
which implies that the gradient of ${\color{black}h}(|e_i|,\varepsilon)$ with respect to $\varepsilon$ is negative if $\varepsilon \ge |e_i|$. So $f(|e_i|,\varepsilon)$ is monotonically decreasing with respect to $\varepsilon$, which implies 
\begin{equation} \label{Delta_i_bound}
    f(|e_i|,\varepsilon) \le f(|e_i|,|e_i|) = |e_i|^\gamma.
\end{equation}
Substituting (\ref{Delta_i_bound}) into (\ref{Delta_i}) gives
$|\Delta_i(e_i)| \le |e_i|^\gamma$, from which we conclude that $\|\Delta(\mathbf{e})\| \le \|\sig^\gamma(\mathbf{e})\|$. 

To prove the second bound in (\ref{bound_Delta}), define the function
\begin{equation*}
    {\color{black}\underline{h}}(|e_i|)=\left \{ \begin{matrix} \gamma |e_i|^{\gamma-1} ,   & \text{if} \  |e_i| > \varepsilon,   \\ \alpha_1 + 2\alpha_2 |e_i|,  & \hspace{0.03cm} \text{if} \  |e_i| \le \varepsilon. \end{matrix} \right.
\end{equation*}
The function ${\color{black}\underline{h}}(|e_i|)$ is monotonically decreasing with respect to $|e_i|$ because of $0<\gamma<1$ and $\alpha_2 < 0$, which implies 
\begin{equation} \label{Delta_i_dot_bound}
    {\color{black}\underline{h}}(|e_i|) \le g(0) = \alpha_1.
\end{equation}
Substituting (\ref{Delta_i_dot_bound}) into (\ref{Delta_i_dot}) gives $|\dot \Delta_i(e_i)| \le \alpha_1|\dot e_i|$, from which we conclude that $\| \dot \Delta(\mathbf{e})\| \le \alpha_1\| \dot {\mathbf{e}}\|$. 
\end{proof}

\subsection{Uncertainty structure} \label{Sec: Uncertainty}
Based on the proposed sliding surface (\ref{surface}), a suitable upper bound structure for the uncertainty of (\ref{EL}) will be derived. Multiplying (\ref{s_dot}) by $\mathbf{M}$ and using (\ref{EL}) we obtain
\begin{align} \label{Ms_dot}
    \mathbf{M}\dot{\mathbf{s}} &=  \mathbf{M} \boldsymbol \lambda_1\dot{\mathbf{e}} + \mathbf{M} \boldsymbol \lambda_2\mathbf{e} + \mathbf{M} \boldsymbol \lambda_3\dot \Delta(\mathbf e) +  \mathbf{M}(\ddot{\mathbf{q}}-\ddot{\mathbf{q}}_d) \nonumber \\
    &= \mathbf{u} - \mathbf{Cs} + \boldsymbol \varphi,
\end{align}\vspace{-0.1cm}
where
\begin{align}
    \boldsymbol \varphi &\triangleq  -(\mathbf C \dot{\mathbf{q}} + \mathbf G + \mathbf F + \mathbf d \nonumber \\
    & \quad + \mathbf M \ddot{\mathbf q}_d - \mathbf{M}\boldsymbol\lambda_1\dot{\mathbf{e}} -  \mathbf{M}\boldsymbol\lambda_2\mathbf{e} - \mathbf{M}\boldsymbol \lambda_3\dot \Delta(\mathbf e) - \mathbf C \mathbf{s}),
\end{align}
represents an aggregate state-dependent uncertainty. 
By combining (\ref{surface}) and Properties 1-2, we obtain the following (state-dependent) bound for such uncertainty  
\begin{align} \label{bound}
    \hspace{-0.25cm}\|\boldsymbol{\varphi}\| & \leq c\|\dot{\mathbf{q}}\|^2 + g + {\color{black}f}\|\dot{\mathbf{q}}\| + d \nonumber \\  & \quad+ 
     m_2(\|\ddot{\mathbf{q}}_d\| + \|\boldsymbol\lambda_1\|\|\dot{\mathbf e}\| + \|\boldsymbol\lambda_2\|\|\mathbf{e}\| + \|\boldsymbol\lambda_3\|\|\dot \Delta(\mathbf{e})\|)  \nonumber \\  &  \quad+ 
    c\|\dot{\mathbf{q}}\|(\|\dot{\mathbf e}\| + \|\boldsymbol\lambda_1\|\|\mathbf{e}\| + \|\boldsymbol\lambda_2\|\|\int_0^t{\mathbf{e}(\tau)d\tau}\|  \nonumber \\ & \quad  + \|\boldsymbol\lambda_3\|\|\Delta(\mathbf{e})\|).
\end{align}
Define
$$\boldsymbol{\xi}(t) = \begin{bmatrix}(\mathbf{e} + \sig^\gamma(\mathbf e))^T(t) & \dot{\mathbf e}^T(t) &(\int_0^t{\mathbf{e}(\tau)d\tau})^T \end{bmatrix}^T.$$
Clearly, the following inequalities hold 
\begin{align} \label{inequalities}
\begin{split}
\|\boldsymbol \xi(t)\| &\geq \|\mathbf{e}(t)\|, \quad \|\boldsymbol \xi(t)\| \geq \|\sig^\gamma(\mathbf e)(t)\|,  \\
\|\boldsymbol \xi(t)\| &\geq \|\dot{\mathbf{e}}(t)\|, \quad \|\boldsymbol \xi(t)\| \geq \|\int_0^t{\mathbf{e}(\tau)d\tau}\|. 
\end{split}
\end{align}
According to Lemma \ref{inequa_Delta} and (\ref{inequalities}), we have
\begin{align}\label{linearbound}
    \|\boldsymbol \varphi\| &\leq \theta_0^* + \theta_1^*\|\boldsymbol \xi(t)\|+ \theta_2^*\|\boldsymbol \xi(t)\|^2,
\end{align}
where $\theta_0^*$, $\theta_1^* $, $\theta_2^*$ are positive {\color{black}unknown} constants defined as
\begin{align*}
\theta_0^* &\triangleq cq_m^2 + g+{\color{black}f}q_m + d + m_2q_{mm},\\
    \theta_1^* &\hspace{-0.1cm}\triangleq \hspace{-0.1cm} c q_m(\hspace{-0.05cm}3 \hspace{-0.1cm}+\hspace{-0.1cm} \|\boldsymbol\lambda_1\| \hspace{-0.1cm}+\hspace{-0.1cm} \|\boldsymbol\lambda_2\| \hspace{-0.1cm}+\hspace{-0.1cm} \|\boldsymbol\lambda_3\|\hspace{-0.05cm}) \hspace{-0.1cm}+\hspace{-0.12cm} {\color{black}f} \hspace{-0.12cm}+\hspace{-0.1cm}  m_2(\hspace{-0.03cm}\|\boldsymbol\lambda_1\| \hspace{-0.1cm}+\hspace{-0.1cm} \|\boldsymbol\lambda_2\| \hspace{-0.1cm}+\hspace{-0.1cm} \|\boldsymbol\lambda_3\|\alpha_1\hspace{-0.05cm}), \\ 
    \theta_2^* & \triangleq c(2+\|\boldsymbol\lambda_1\| +\|\boldsymbol\lambda_2\| +\|\boldsymbol\lambda_3\|).
\end{align*}
Note {\color{black}that the form \eqref{linearbound} has been derived independently on the specific system structure.} 

\subsection{Adaptive control law} \label{Sec: AdativeLaw}
Aiming at compensating the uncertainty (\ref{linearbound}), the following control input is proposed
\begin{align} \label{controller}
    \mathbf u(t) = -\boldsymbol \Lambda \mathbf{s}(t) - \rho(t) {\color{black}\sign}({\mathbf{s}(t)}) -\sigma\sig^{v}(\mathbf{s}(t)),
\end{align}
with $\boldsymbol \Lambda$ a positive definite matrix, $\mu \hspace{-0.025cm}>\hspace{-0.025cm}0, \sigma \hspace{-0.025cm}>\hspace{-0.025cm}0$, ${\color{black}0\hspace{-0.025cm}<\hspace{-0.025cm}v\hspace{-0.025cm}\le\hspace{-0.025cm} 1}$,
and
\begin{align}\label{rho}
\rho(t) = \hat{\theta}_0(t) + \hat{\theta}_1(t)||\boldsymbol \xi(t)|| + \hat{\theta}_2(t)||\boldsymbol \xi(t)||^2. 
\end{align}
The gains $\hat{\theta}_i,i=0, 1, 2$ can be interpreted as estimates of $\theta^*_i$ in (\ref{linearbound}), updated by the following adaptive laws
\begin{equation} \label{ThetaadAptiveLaw}
    \dot{\hat{\theta}}_i(t) = \|\mathbf{s}(t)\|\|\boldsymbol \xi(t)\|^i-\beta_i \hat{\theta}_i^v(t), \quad \hat{\theta}_i(0) > 0
\end{equation}
where $\beta_i, i=0,1,2$ are positive constants. 
We now analyze the stability properties of the proposed method.


\begin{lemmaNoParens} \label{ThetaInitial}
    Consider the adaptive laws in (\ref{ThetaadAptiveLaw}) with initial condition $\hat{\theta}_i(0)>0$. Then, $\hat{\theta}_i(t)\ge0, \forall t \ge 0$. 
\end{lemmaNoParens}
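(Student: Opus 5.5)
The argument is a comparison/invariance argument on the scalar ODE (\ref{ThetaadAptiveLaw}). The forcing term $\|\mathbf{s}(t)\|\|\boldsymbol \xi(t)\|^i$ is always nonnegative. The only term that can push $\hat{\theta}_i$ downward is $-\beta_i \hat{\theta}_i^v$, and this term vanishes at $\hat{\theta}_i=0$. So the plan is to show that the trajectory cannot cross zero.

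First I fix the meaning of $\hat{\theta}_i^v$ for $0<v\le 1$. For $v=1$ the law is linear, $\dot{\hat\theta}_i=-\beta_i\hat\theta_i+\|\mathbf s\|\|\boldsymbol\xi\|^i$, and variation of constants gives
\begin{align*}
\hat{\theta}_i(t)=e^{-\beta_i t}\hat{\theta}_i(0)+\int_0^t e^{-\beta_i (t-\tau)}\|\mathbf{s}(\tau)\|\|\boldsymbol \xi(\tau)\|^i d\tau .
\end{align*}
Both terms are positive or nonnegative, so $\hat\theta_i(t)>0$ for all $t$.

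For $0<v<1$ I read $\hat\theta_i^v$ as $\sig^v(\hat\theta_i)$, which makes the right-hand side continuous everywhere. The argument is then by contradiction. Suppose $\hat\theta_i$ becomes negative. By continuity and $\hat\theta_i(0)>0$, there is a first time $t_1$ with $\hat\theta_i(t_1)=0$, and a $t_2>t_1$ with $\hat\theta_i(t_2)<0$. Take $t_0=\sup\{t\in[t_1,t_2]:\hat\theta_i(t)\ge 0\}$. On $(t_0,t_2]$ we have $\hat\theta_i<0$, so
\begin{align*}
\dot{\hat\theta}_i=\|\mathbf s\|\|\boldsymbol\xi\|^i+\beta_i|\hat\theta_i|^v>0 .
\end{align*}
Hence $\hat\theta_i$ is increasing on that interval, giving $\hat\theta_i(t_2)\ge\hat\theta_i(t_0)=0$. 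This contradicts $\hat\theta_i(t_2)<0$, so $\hat\theta_i(t)\ge0$ for all $t\ge 0$.

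The main obstacle is non-uniqueness at $\hat\theta_i=0$ when $v<1$, since $\hat\theta_i^v$ is not Lipschitz there. For this reason I avoid any uniqueness-based invariance claim; the sign argument above works for any absolutely continuous solution. It also needs $\|\mathbf s\|\|\boldsymbol\xi\|^i$ to be well defined along the closed-loop trajectory on its interval of existence, which I take for granted.

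If instead one insists on the literal $\hat\theta_i^v$, defined only for $\hat\theta_i\ge0$, the same computation shows that at any contact time $\dot{\hat\theta}_i\ge0$. Thus $[0,\infty)$ is forward invariant in the Nagumo sense, and the solution never leaves the domain where the law is defined.
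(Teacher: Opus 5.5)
Your proof is correct, but it differs from the paper's at the decisive step. The paper argues pointwise at the crossing. It claims that if $\hat\theta_i$ ever became negative, continuity would give a $\tau$ with $\hat\theta_i(\tau)=0$ and $\dot{\hat\theta}_i(\tau)<0$. It then refutes this by evaluating (\ref{ThetaadAptiveLaw}) at zero, where $\dot{\hat\theta}_i(\tau)=\|\mathbf s(\tau)\|\|\boldsymbol\xi(\tau)\|^i\ge 0$. That is shorter, but a function can cross zero with zero slope, and that can happen here. Take the scalar law with forcing $\|\mathbf s\|\|\boldsymbol\xi\|^i\equiv 0$ and $0<v<1$. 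Then $\hat\theta_i$ reaches $0$ at $T=\hat\theta_i(0)^{1-v}/(\beta_i(1-v))$ with $\dot{\hat\theta}_i(T)=0$. If $\hat\theta_i^v$ were extended as $|\hat\theta_i|^v$ for negative arguments, $-\bigl(\beta_i(1-v)(t-T)\bigr)^{1/(1-v)}$ would be a valid continuation into negative values. So the value of the vector field at $0$ cannot settle the question; one has to use the law on the negative side. Your choice of $\sig^v$, together with the last-nonnegative-time argument (strictly positive derivative on $(t_0,t_2]$), does exactly that. The alternative is to restrict the law to $[0,\infty)$, as in your final paragraph. There the paper's boundary computation becomes Nagumo's tangency condition and yields viability. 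Your separate variation-of-constants treatment of $v=1$ also gives strict positivity $\hat\theta_i(t)>0$. This cannot be expected for $v<1$: in the example above, the $\sig^v$ law gives $\hat\theta_i\equiv 0$ after $T$. So the non-strict inequality in the statement is the right one.
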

\begin{proof} We use a proof by contradiction. Assume that $\exists \hat{\theta}_i(t)<0$ under initial condition $\hat{\theta}_i(0)>0$. Then, there must exist $0<\tau<t$ such that $\hat{\theta}_i(\tau)=0$ and $\dot{\hat{\theta}}_i(\tau)<0$ because of continuity of $\hat{\theta}_i(t)$. However, substituting $\hat{\theta}_i(\tau)=0$ into (\ref{ThetaadAptiveLaw}) gives $\dot{\hat{\theta}}_i(\tau) = \|\mathbf{s}(\tau)\|\|\boldsymbol \xi(\tau)\|^i \ge 0$, which does not satisfy the previously assumed negative time derivative. This verifies that $\hat{\theta}_i(t)\ge0, \forall t \ge 0$ hold under initial condition $\hat{\theta}_i(0) > 0$.
\end{proof}
\begin{theorem}
     Under Properties 1-3, the origin of the closed-loop Euler-Lagrange  dynamics (\ref{EL}) with control law (\ref{controller})-(\ref{rho}) and adaptive law (\ref{ThetaadAptiveLaw}), are practical finite-time stable. Let $\bar{s}$ be the bound for $\mathbf{s}$ in \eqref{surface} after finite time: then, the tracking error $\mathbf{e}$ converges in finite time to the region 
     \begin{align*}
\begin{split}
             \Phi &= \max \{  \sqrt{n}\varepsilon, \frac{\sqrt{n}\bar{s}}{\min_i\{\lambda_{1i}\}} ,  \sqrt{n}\left (\frac{\bar{s}}{\min_i\{\lambda_{3i}\}} \right )^{\frac{1}{\gamma}} \} .             
         \end{split}
     \end{align*}
\end{theorem}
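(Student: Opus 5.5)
We will use the Lyapunov candidate
\begin{equation*}
V=\tfrac12\mathbf s^T\mathbf M\mathbf s+\tfrac12\sum_{i=0}^{2}\tilde\theta_i^2,\qquad \tilde\theta_i=\hat\theta_i-\theta_i^*,
\end{equation*}
and show that it satisfies the inequality of Lemma \ref{lemmafinitetime}.

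\textbf{Derivative of $V$.} Differentiate $V$ along (\ref{Ms_dot}). Property 3 cancels the terms $\tfrac12\mathbf s^T\dot{\mathbf M}\mathbf s-\mathbf s^T\mathbf C\mathbf s$. Inserting (\ref{controller}) gives
\begin{equation*}
\dot V=-\mathbf s^T\boldsymbol\Lambda\mathbf s-\rho\,\mathbf s^T\sign(\mathbf s)-\sigma\mathbf s^T\sig^v(\mathbf s)+\mathbf s^T\boldsymbol\varphi+\sum_i\tilde\theta_i\dot{\hat\theta}_i .
\end{equation*}
The uncertainty and adaptation terms are handled as follows.
\begin{itemize}
\item Since $\mathbf s^T\sign(\mathbf s)=\|\mathbf s\|_1\ge\|\mathbf s\|$, Lemma \ref{ThetaInitial} ($\hat\theta_i\ge0$) gives $-\rho\,\mathbf s^T\sign(\mathbf s)\le-\rho\|\mathbf s\|$.
\item By (\ref{linearbound}), $\mathbf s^T\boldsymbol\varphi\le\|\mathbf s\|\sum_i\theta_i^*\|\boldsymbol\xi\|^i$.
\item By (\ref{ThetaadAptiveLaw}), $\sum_i\tilde\theta_i\dot{\hat\theta}_i=\|\mathbf s\|\sum_i\tilde\theta_i\|\boldsymbol\xi\|^i-\sum_i\beta_i\tilde\theta_i\hat\theta_i^v$.
\end{itemize}
Combining these, the terms linear in $\|\mathbf s\|$ cancel exactly, and
\begin{equation*}
\dot V\le-\sigma\|\mathbf s\|_{1+v}^{1+v}-\sum_i\beta_i\tilde\theta_i\hat\theta_i^v .
\end{equation*}
Here $\|\mathbf s\|_{1+v}^{1+v}=\sum_j|s_j|^{1+v}$.

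\textbf{Main obstacle: the leakage term $-\beta_i\tilde\theta_i\hat\theta_i^v$.} We need
\begin{equation*}
-\tilde\theta_i\hat\theta_i^v\le-a|\tilde\theta_i|^{1+v}+b(\theta_i^*)
\end{equation*}
for constants $a,b>0$. First write $\hat\theta_i=\tilde\theta_i+\theta_i^*$, using $\hat\theta_i\ge0$.
\begin{itemize}
\item If $\tilde\theta_i\le0$, the term is $|\tilde\theta_i|\hat\theta_i^v\le|\tilde\theta_i|\theta_i^{*v}$, which is bounded since $|\tilde\theta_i|\le\theta_i^*$ there. It is then absorbed by adding and subtracting $|\tilde\theta_i|^{1+v}$.
\item If $\tilde\theta_i>0$, use $(\tilde\theta_i+\theta_i^*)^v\ge\tilde\theta_i^v$ when $v\le1$. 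This gives $-\tilde\theta_i\hat\theta_i^v\le-\tilde\theta_i^{1+v}$.
\end{itemize}
Hence in both cases the inequality holds, with $b$ depending on $\theta_i^*$ and $\beta_i$.

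\textbf{Applying Lemma \ref{lemmafinitetime}.} Use $\sum_j|s_j|^{1+v}\ge\|\mathbf s\|^{1+v}$ for $v\le1$, Property 2 to relate $\|\mathbf s\|^2$ to $\mathbf s^T\mathbf M\mathbf s$, and the inequality $\sum x_k^{(1+v)/2}\ge(\sum x_k)^{(1+v)/2}$. These yield
\begin{equation*}
\dot V\le-\eta V^{\frac{1+v}{2}}+\omega
\end{equation*}
with $\eta=\min\{\sigma(2/m_2)^{\frac{1+v}{2}},\,a\beta_i2^{\frac{1+v}{2}}\}$ and $\omega=\sum_i\beta_ib$. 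Lemma \ref{lemmafinitetime} then gives practical finite-time stability. It also shows $V$, hence $\mathbf s$, enters a bounded set $\Psi$ in finite time. This defines $\bar s$ through $\tfrac{m_1}{2}\|\mathbf s\|^2\le V$.

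\textbf{From $\mathbf s$ to $\mathbf e$.} Afterwards we argue componentwise on $|s_i|\le\bar s$. For each $i$, $s_i=\dot e_i+\lambda_{1i}e_i+\lambda_{2i}\int e_i+\lambda_{3i}\Delta_i(e_i)$.
\begin{itemize}
\item If $|e_i|\le\varepsilon$, the component is already within the target.
\item Otherwise, treat the surface as a sliding-mode reaching argument. Outside the set $\{|e_i|\le\bar s/\lambda_{1i}\}\cup\{|e_i|\le(\bar s/\lambda_{3i})^{1/\gamma}\}$, the terms $\lambda_{1i}e_i+\lambda_{3i}\sig^\gamma(e_i)$ dominate $s_i$ in sign. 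Then $e_i\dot e_i<0$, treating the integral state as part of the equilibrium of the reduced dynamics, and $|e_i|$ decreases in finite time because of the fractional power $\gamma$.
\end{itemize}
This yields the per-component bound $\max\{\varepsilon,\bar s/\lambda_{1i},(\bar s/\lambda_{3i})^{1/\gamma}\}$. Taking the worst $i$ and multiplying by $\sqrt n$ gives $\Phi$.

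This last step is heuristic regarding the integral term. I would handle it by a dominance argument on the second-order reduced dynamics, using the fact that boundedness of the closed-loop states keeps $\int\mathbf e$ bounded.
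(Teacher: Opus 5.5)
Up to the application of Lemma \ref{lemmafinitetime}, your argument is essentially the paper's. You use the same $V$ (your $\tilde\theta_i$ has the opposite sign, which is harmless) and the same cancellation via Property 3, Lemma \ref{ThetaInitial} and the adaptive law (\ref{ThetaadAptiveLaw}). You also obtain a leakage bound of the form $-a|\tilde\theta_i|^{1+v}+b$. Your case split on the sign of $\tilde\theta_i$ is a valid, more elementary substitute for the paper's convexity chain (\ref{inequatheta}). Since your comparison step only uses $\frac{1+v}{2}\le 1$, it covers $v=1$ without a separate Design (b). Your constants $\sigma(2/m_2)^{\frac{1+v}{2}}$ and $a\beta_i 2^{\frac{1+v}{2}}$ are the ones that actually follow; the paper's $\eta_1$ drops these exponents.

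The genuine gap is the step from $|s_i|\le\bar s$ to the bound on $e_i$. Your claim that $e_i\dot e_i<0$ outside the band is false. By (\ref{surface}),
\[
\dot e_i=s_i-\lambda_{1i}e_i-\lambda_{3i}\Delta_i(e_i)-\lambda_{2i}\int_0^t e_i(\tau)d\tau,
\]
and the integral is an independent state that is neither sign-definite nor dominated by $e_i$. At any instant with $e_i>0$ and $\lambda_{2i}\int_0^t e_i(\tau)d\tau<s_i-\lambda_{1i}e_i-\lambda_{3i}\Delta_i(e_i)$, you get $e_i\dot e_i>0$. Such instants do occur: already for $s_i\equiv 0$ the pair $(\int_0^t e_i,\,e_i)$ is a damped oscillator, so $|e_i|$ grows again after each sign change. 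Hence no argument on $e_i$ alone can work, \emph{treating the integral state as part of the equilibrium} is not justified, and finite-time decrease does not follow from the power $\gamma$.

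The missing idea, which the paper supplies, is the energy $V_e=\frac12 e_i^2+\frac12\lambda_{2i}\bigl(\int_0^t e_i(\tau)d\tau\bigr)^2$. For it the integral cross term cancels exactly:
\[
\dot V_e=e_i s_i-\lambda_{1i}e_i^2-\lambda_{3i}e_i\Delta_i(e_i).
\]
Moreover $e_i\Delta_i(e_i)\ge 0$ everywhere: inside the boundary layer $e_i\Delta_i(e_i)=e_i^2(\alpha_1+\alpha_2|e_i|)$ with $\alpha_1+\alpha_2|e_i|\ge\varepsilon^{\gamma-1}>0$. Hence $\dot V_e\le 0$ whenever $\lambda_{1i}|e_i|\ge\bar s$, or whenever $|e_i|>\varepsilon$ and $\lambda_{3i}|e_i|^{\gamma}\ge\bar s$. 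This is what the paper's $V_e$ computation, combined with the effective-gain rewritings (\ref{form1})--(\ref{form3}), expresses. It gives boundedness of $e_i$ and $\int_0^t e_i(\tau)d\tau$, and decrease of $V_e$ outside the band.

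Be aware that the paper's passage from there to the set $\Phi$, and to \emph{finite-time} convergence of $\mathbf e$, is itself informal; its Barbalat step only yields asymptotic convergence. So in both versions this part of the theorem rests on a weaker argument than the $(\mathbf s,\tilde\theta)$ part.
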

\begin{proof} Select the Lyapunov function candidate:
\begin{equation}
    V(\mathbf{s},\tilde{\theta}) = \frac{1}{2}\mathbf s^T \mathbf{M} \mathbf{s} + \frac{1}{2} \sum_{i=0}^2 \tilde{\theta}_i^2 \label{Lyapunov}
\end{equation}
with $\tilde{\theta}_i = \theta_i^*-\hat{\theta}_i$.  The time derivative of $V$ along the dynamics (\ref{Ms_dot}), (\ref{controller}) is
\begin{align*}
    \dot V &= \mathbf{s}^T(\mathbf{u}-\mathbf{Cs}+\boldsymbol \varphi) + \frac{1}{2}\mathbf{s}^T \dot{\mathbf{M}} \mathbf{s} - \sum_{i=0}^2\tilde{\theta}_i\dot{\hat{\theta}}_i  \\
    &=\hspace{-0.05cm} \mathbf{s}^T \hspace{-0.08cm}(\hspace{-0.05cm}-\hspace{-0.08cm}\boldsymbol \Lambda \mathbf{s}\hspace{-0.1cm} - \hspace{-0.1cm}\rho {\color{black}\sign}({\mathbf{s}}) \hspace{-0.08cm} -\hspace{-0.08cm}\sigma\sig^{v}\hspace{-0.06cm}(\mathbf{s}) \hspace{-0.1cm}+\hspace{-0.1cm}\boldsymbol \varphi\hspace{-0.05cm}) \hspace{-0.11cm} + \hspace{-0.11cm}\frac{1}{2}\mathbf s^T\hspace{-0.07cm}(\dot{\mathbf{M}}\hspace{-0.1cm} - \hspace{-0.1cm} 2\mathbf{C}) \mathbf s \hspace{-0.1cm}-\hspace{-0.2cm} \sum_{i=0}^2\tilde{\theta}_i\dot{\hat{\theta}}_i.
\end{align*}
Using Property 3, we obtain
\begin{align*}
    \dot V &= -\mathbf{s}^T\boldsymbol \Lambda \mathbf{s} \hspace{-0.05cm} -\hspace{-0.05cm}\sigma\mathbf{s}^T\sig^{v}(\mathbf{s})\hspace{-0.05cm} +\hspace{-0.05cm}\mathbf{s}^T \boldsymbol \varphi - \rho \mathbf{s}^T {\color{black}\sign}({\mathbf{s}}) \hspace{-0.05cm}-\hspace{-0.05cm} \sum_{i=0}^2\tilde{\theta}_i\dot{\hat{\theta}}_i.
\end{align*}
Since $\rho > 0$ from Lemma \ref{ThetaInitial}, substituting (\ref{linearbound}), (\ref{rho}) and (\ref{ThetaadAptiveLaw}) in the Lyapunov time derivative gives
\begin{align}
    \dot V &\le -\mathbf{s}^T\boldsymbol \Lambda \mathbf{s}  -\sigma\mathbf{s}^T\sig^{v}(\mathbf{s}) +\sum_{i=0}^2{\tilde{\theta}_i(\|\boldsymbol \xi\|^i\|\mathbf s\| - \dot{\hat{\theta}}_i)} \nonumber \\
    &\le -\mathbf{s}^T\boldsymbol \Lambda \mathbf{s}  -\sigma\sum_{i=1}^n{(|s_i|^2)^{\frac{1+v}{2}}} + \sum_{i=0}^2{\beta_i\tilde{\theta}_i \hat{\theta}_i^v}. \label{simplifyVdot}
\end{align}
{\color{black}In line with Lemma 1, we cover the cases $0<v<1$ and $v=1$, to make the design more flexible.}

{\color{black} \emph{Design (a)}: $0<v<1$.} 
Using Lemma \ref{ThetaInitial} and the definitions of $\theta^*_i$, $\tilde{\theta}_i$, we have $\theta^*_i>0$, $\hat{\theta}_i \ge 0$, and $\tilde{\theta}_i \le \theta^*_i$. Then, let us analyze the term $\tilde{\theta}_i \hat{\theta}^v_i$ by making use of {\color{black}\cite[Lemmas 4,5]{2018Adaptive}}
\begin{align} \label{inequatheta}
    \hspace{-0.3cm}\tilde{\theta}_i \hat{\theta}^v_i &\hspace{-0.12cm}=\hspace{-0.08cm} \hat{\theta}^v_i (\theta^*_i-\hat{\theta}_i) \nonumber \\ &\hspace{-0.12cm}\le\hspace{-0.1cm} \frac{1}{1\hspace{-0.08cm}+\hspace{-0.1cm}v} ({\theta_i^{*^{1+v}}}\hspace{-0.3cm}-\hat{\theta}_i^{1+v}) \leq \frac{1}{1\hspace{-0.08cm}+\hspace{-0.1cm}v} ({\theta_i^{*^{1+v}}}\hspace{-0.3cm}-(\theta_i^*-|\tilde{\theta}_i|)^{1+v}) \nonumber \\
    &\hspace{-0.12cm}\le \hspace{-0.1cm} \frac{1}{1\hspace{-0.1cm}+\hspace{-0.1cm}v}({\theta_i^{*^{1+v}}}\hspace{-0.35cm}-\hspace{-0.05cm}(|\tilde{\theta}_i|^{1+v} \hspace{-0.3cm}-{\theta_i^{*^{1+v}}})) \hspace{-0.08cm}=\hspace{-0.08cm} \frac{1}{1\hspace{-0.1cm}+\hspace{-0.1cm}v}(2 \theta_i^{*^{1+v}}\hspace{-0.3cm}-\hspace{-0.08cm}|\tilde{\theta}_i|^{1+v}).
\end{align}

\noindent Substituting (\ref{inequatheta}) into (\ref{simplifyVdot}) gives 
\begin{align*}
        \dot V &\le -\lambda_{min}(\boldsymbol \Lambda)\|\mathbf{s}\|^2 - \sigma\sum_{i=1}^n{(|s_i|^2)^{\frac{1+v}{2}}} \\ & \quad- \frac{1}{1+v}\sum_{i=0}^2\beta_i{|\tilde{\theta}_i|^{1+v}} + \frac{2}{1+v}\sum_{i=0}^2{\beta_i\theta_i^{*^{1+v}}} \\
        & \le -\hspace{-0.02cm} \sigma\hspace{-0.1cm} \sum_{i=1}^n{\hspace{-0.05cm}(|s_i|^2)^{\frac{1+v}{2}}} \hspace{-0.14cm}-\hspace{-0.1cm}  \frac{1}{1\hspace{-0.1cm}+\hspace{-0.1cm}v}\hspace{-0.08cm} \sum_{i=0}^2\hspace{-0.05cm}\beta_i{(|\tilde{\theta}_i|^2)^{\frac{1+v}{2}}} \hspace{-0.1cm}+\hspace{-0.1cm} \frac{2}{1\hspace{-0.08cm}+\hspace{-0.08cm}v} \hspace{-0.08cm} \sum_{i=0}^2{\hspace{-0.05cm}\beta_i\theta_i^{*^{1+v}}}.
\end{align*}
Then, using {\color{black}\cite[Lemma 1]{ZUO2015305}} we have 
\begin{align} \label{Vdot}
    \dot V \le -\eta_1 \left (\frac{m_2}{2}\sum_{i=1}^n|s_i|^2 + \frac{1}{2}\sum_{i=0}^2|\tilde{\theta}_i|^2 \right)^{\frac{1+v}{2}} \hspace{-0.3cm} + \omega_1,
\end{align}
where $\eta_1\hspace{-0.1cm}=\hspace{-0.1cm}\min_i \{\frac{2\sigma}{m_2},\frac{2\beta_i}{1+v} \}\hspace{-0.05cm}>\hspace{-0.05cm}0$ and  $\omega_1\hspace{-0.1cm}=\hspace{-0.1cm} \frac{2}{1+v}\sum_{i=0}^2{\beta_i\theta_i^{*^{1+v}}}\hspace{-0.2cm}>0$. The Lyapunov function (\ref{Lyapunov}) can be upper bounded as
\begin{equation} \label{inequaLya}
    V \le \frac{m_2}{2}\|\mathbf{s}\|^2 + \frac{1}{2}\sum_{i=0}^2 |\tilde{\theta}_i|^2,
\end{equation}
and substituting (\ref{inequaLya}) into (\ref{Vdot}) gives
\begin{equation}
    \dot V \le -\eta_1 V^{\frac{1+v}{2}} + \omega_1.
\end{equation}
We conclude from  Lemma \ref{lemmafinitetime} that the states $(\mathbf{s},\tilde{\theta})$ are driven to the region $\Psi$ in finite-time $T_s$, where
\begin{align*}
    \Psi &= \{\mathbf{s, \tilde{\theta}}| \ V^{\frac{1+v}{2}}(\mathbf{s, \tilde{\theta}}) \leq \frac{\omega_1}{(1-\iota)\eta_1} \}, \quad 0<\iota<1, \\
    T_s &= \frac{2}{(1\hspace{-0.02cm}-\hspace{-0.02cm}v)\iota \eta_1} \left [V^{\frac{1-v}{2}}(\mathbf{s}(0), \tilde{\theta}(0)) -\left (\frac{\omega_1}{(1\hspace{-0.02cm}-\hspace{-0.02cm}\iota)\eta_1} \right )^{\frac{1-v}{1+v}} \right].
\end{align*}

{\color{black} \emph{Design (b)}: $v=1$. Note that (\ref{simplifyVdot}) reduces to
\begin{align*}
    \dot V &\le -\mathbf{s}^T\boldsymbol \Lambda \mathbf{s} + \sum_{i=0}^2{\beta_i\tilde{\theta}_i \hat{\theta}_i}.
\end{align*}
With analogous steps as design (a), we have that the states $(\mathbf{s},\tilde{\theta})$ are driven to the region $\Psi$ in finite-time $T_s$, where
\begin{align*}
    \Psi &= \{\mathbf{s, \tilde{\theta}}| \ V^{\frac{1+v}{2}}(\mathbf{s, \tilde{\theta}}) \leq \frac{\omega_2}{(1-\iota)\eta_2} \}, \\
    T_s &= \frac{1}{\iota \eta_2}[\ln{V(\mathbf{s}(0), \tilde{\theta}(0))\omega_2}-\ln{(1-\iota)\eta_2}],
\end{align*}
where $\eta_2\hspace{-0.1cm}=\hspace{-0.1cm}\frac{\min_i\{\lambda_{min}(\boldsymbol \Lambda), \beta_i/2\}}{\max_i\{{m_2}/{2},1/2\}} \hspace{-0.1cm}>\hspace{-0.1cm} 0$ and  $\omega_2\hspace{-0.08cm}=\hspace{-0.08cm} \frac{1}{2}\sum_{i=0}^2\beta_i \theta_i^{*2} > 0$.} 

To find a bound for the tracking error, consider the Lyapunov function $V_e = \frac{1}{2}e_i^2 + \frac{1}{2}\lambda_{2i}(\int_0^t e_i(\tau)d\tau)^2$, giving 
\begin{align} \label{V_edot}
    \dot{V}_e = e_i \dot{e}_i + \lambda_{2i} e_i \int_0^t e_i(\tau)d\tau.
\end{align}
Then, we proceed along two cases as follows. 

\emph{Case I}: $|e_i| > \varepsilon$. According to (\ref{surface}) and (\ref{Delta_i}) we have
\begin{equation} \label{surface_e}
    \dot{e}_i + \lambda_{1i} e_i + \lambda_{2i} \int_0^t{e_i(\tau)d\tau} +\lambda_{3i} \sig^\gamma(e_i) = s_i,
\end{equation}
which can be rewritten in three different forms as
\begin{align}
    &\dot{e}_i + (\lambda_{1i}\hspace{-0.05cm}-\hspace{-0.05cm}\frac{s_i}{e_i})e_i + \lambda_{2i} \int_0^t{e_i(\tau)d\tau} + \lambda_{3i} \sig^\gamma(e_i) = 0, \label{form1} \\
    &\dot{e}_i \hspace{-0.1cm} + \hspace{-0.1cm} \lambda_{1i} e_i \hspace{-0.1cm} + \hspace{-0.1cm} (\lambda_{2i}\hspace{-0.08cm}-\hspace{-0.08cm}\frac{s_i}{\int_0^t\hspace{-0.08cm}{e_i(\tau)d\tau}})\hspace{-0.15cm} \int_0^t\hspace{-0.19cm}{e_i(\tau)d\tau} \hspace{-0.08cm} + \hspace{-0.1cm} \lambda_{3i} \sig^\gamma(e_i) \hspace{-0.08cm}=\hspace{-0.08cm} 0, \label{form2} \\
    &\dot{e}_i \hspace{-0.08cm}+\hspace{-0.08cm} \lambda_{1i} e_i \hspace{-0.08cm} +\hspace{-0.08cm}\lambda_{2i}\hspace{-0.08cm} \int_0^t\hspace{-0.08cm}{e_i(\tau)d\tau}\hspace{-0.08cm}+ \hspace{-0.08cm} (\lambda_{3i}\hspace{-0.08cm}-\hspace{-0.08cm}\frac{s_i}{\sig^\gamma(e_i)})\sig^\gamma(e_i) \hspace{-0.07cm} = \hspace{-0.07cm} 0. \label{form3}
\end{align}
Note that substituting the condition $s_i = 0$ into (\ref{surface}) gives
\begin{align} \label{e_idot}
    \dot{e}_i = -(\lambda_{1i} e_i + \lambda_{2i} \int_0^t e_i(\tau)d\tau + \lambda_{3i} \Delta_i(e_i) ).
\end{align}
Then, substituting (\ref{e_idot}) into (\ref{V_edot}) gives
\begin{align*}
     \dot{V}_e \hspace{-0.05cm} &=\hspace{-0.05cm} -e_i(\lambda_{1i} e_i\hspace{-0.05cm} +\hspace{-0.05cm} \lambda_{3i} \Delta_i(e_i)) \hspace{-0.05cm}=\hspace{-0.05cm} -\lambda_{1i} e_i^2\hspace{-0.05cm} - \hspace{-0.05cm}\lambda_{3i} |e_i|^{\gamma+1} \hspace{-0.05cm}\le \hspace{-0.05cm}0.
\end{align*}
In the following, let us analyze the tracking errors assuming (\ref{e_idot}) holds. The fact that $V_e > 0$ and $\dot V_e \le 0$ 
 implies $\lim_{t \to \infty}V_e(t) = V_e(\infty) < V_e(0)$, that is, $e_i$ and $\int_0^t e_i(\tau)d\tau$ are bounded. Meanwhile, $\dot e_i$ is also bounded according to (\ref{e_idot}). 
Furthermore, by integrating $\dot{V}_e$, we have
\begin{align*}
    \lim_{t \to \infty}\int_0^t{\lambda_{1i} e_i^2(\tau) + \lambda_{3i} |e_i|^{\gamma+1}(\tau)d\tau} = V_e(0) - V_e(\infty),
\end{align*}
which implies that $e_i$ has bounded 2-norm. 
Thus, assuming (\ref{e_idot}) holds, we get $e_i\to 0$ from Barbalat's lemma. An ultimate bound for $\int_0^t e_i(\tau)d\tau$ is obtained noting that, for zero initial conditions of the integrator, $V_e(\infty) \hspace{-0.1cm}<\hspace{-0.1cm} V_e(0) \hspace{-0.1cm}=\hspace{-0.1cm} \frac{1}{2}e_i^2(0)$, and $V_e(\infty) = \frac{1}{2}e_i^2(\infty) + \lim_{t \to \infty}\frac{1}{2}\lambda_{2i}(\int_0^t\hspace{-0.05cm}{ e_i^2(\tau)d\tau})^2 \hspace{-0.1cm}<\hspace{-0.1cm} \frac{1}{2}e_i^2(0)$. This gives $\frac{|e_i(0)|}{\sqrt{\lambda_{2i}}}$ as an ultimate bound for $|\int_0^t{e_i(\tau)d\tau}|$. 

With this in mind, we notice that (\ref{form1}), (\ref{form2}) and (\ref{form3}) are in the form of (\ref{e_idot}) whenever 
\begin{align*}
    \lambda_{1i}\hspace{-0.05cm}-\hspace{-0.05cm}\frac{s_i}{e_i} \hspace{-0.05cm}>\hspace{-0.05cm} 0, \    \ \lambda_{2i}\hspace{-0.05cm}-\hspace{-0.05cm}\frac{s_i}{\int_0^t{e_i(\tau)d\tau}}\hspace{-0.05cm}>\hspace{-0.05cm} 0, \   \ 
    \lambda_{3i}\hspace{-0.05cm}-\hspace{-0.05cm}\frac{s_i}{\sig^\gamma(e_i)} \hspace{-0.05cm}>\hspace{-0.05cm} 0.
\end{align*}
This implies that  $e_i$ will converge to\vspace{-0.2cm} 
\begin{align*}
    &|e_i| \le \frac{\bar{s}}{\min_i\{\lambda_{1i}\}} \ \ \text{or} \ \
    |e_i| \le \left (\frac{\bar{s}}{\min_i\{\lambda_{3i}\}} \right )^{\frac{1}{\gamma}},  
\end{align*}
whereas $\int_0^t{e_i(\tau)d\tau}$ will converge to 
\begin{align*}
    &|\hspace{-0.075cm}\int_0^t{e_i(\tau)d\tau}| \hspace{-0.025cm}\le\hspace{-0.025cm} \frac{\bar{s}}{\min_i\{\lambda_{2i}\}} \ \ \hspace{-0.025cm}\text{or}\hspace{-0.025cm} \ \
    |\hspace{-0.075cm}\int_0^t{e_i(\tau)d\tau}| \hspace{-0.025cm}\le\hspace{-0.025cm} \max_i\{\frac{|e_i(0)|}{\sqrt{\lambda_{2i}}}\}.
\end{align*}
\emph{Case II}: $|e_i| \le \varepsilon$. Substituting (\ref{Delta_i}) into (\ref{surface}) gives
\begin{equation} \label{surface_e2}
    \dot{e}_i \hspace{-0.035cm}+\hspace{-0.035cm} \lambda_{1i} e_i \hspace{-0.035cm}+\hspace{-0.035cm} \lambda_{2i}\hspace{-0.1cm}\int_0^t\hspace{-0.15cm} e_i(\tau)d\tau \hspace{-0.035cm}+\hspace{-0.035cm} \lambda_{3i} (\alpha_1 e_i \hspace{-0.035cm}+\hspace{-0.035cm} \alpha_2 \sign(e_i) e^2_i) \hspace{-0.05cm}=\hspace{-0.05cm} s_i.
\end{equation}
Along analogous steps as Case I, we can rewrite (\ref{surface_e2}) as
\begin{align*}
     &\dot{e}_i + \lambda_{1i} e_i + ( \lambda_{2i}-\frac{s_i}{\int_0^t{e_i(\tau) d\tau}})  \int_0^t{e_i(\tau)d\tau} \\
     &\qquad + \lambda_{3i} (\alpha_1 e_i + \alpha_2 \sign(e_i) e^2_i) = 0,
\end{align*}
and obtain boundedness of $e_i$ and $\int_0^t{e_i(\tau)d\tau}$. Proceeding as in Case I and combining the two cases, we obtain convergence of $\mathbf{e}$ in finite-time to the region $\Phi$ in Theorem 1, and convergence of $\int_0^t{\mathbf{e}(\tau)d\tau}$ in finite-time to the region $\bar{\Phi} = \max \{\sqrt{n}\max_i \{\frac{|e_i(0)|}{\sqrt{\lambda_{2i}}}\}, \frac{\sqrt{n} \bar{s}}{\min_i\{\lambda_{2i}\}}\}$. 
The proof is completed.
\end{proof}

\begin{remark}[Estimation of unstructured uncertainty]
The role of the adaptive law (\ref{rho}) is to estimate the uncertainty (\ref{linearbound}) without its structural knowledge. In fact, (\ref{linearbound}) has been derived from EL properties that hold independently of the specific system structure. {\color{black}It is known that nonparametric perturbations may destroy stability of classic adaptive
control \cite{robustadaptive}. As the proposed method achieves stability in the presence of unstructured state-dependent perturbations, it is "robust" in the sense of robust adaptive control.}\vspace{-0.2cm} 
\end{remark}

{\color{black}
\begin{remark}[Modularity]
To illustrate the modular nature of the proposed method, let use (\ref{surface}) and rewrite (\ref{controller}) as:
\begin{align}
\begin{split} \label{PID}
    \mathbf u(t) = & -\overbrace{\boldsymbol \Lambda \mathbf{s}_1(t)}^{\text{PID action}}  -\overbrace{\rho(t){\color{black} \sign}({\mathbf{s}(t)/\mu})}^{\text{{\color{black}Robust}-adaptive action}}  \\
    &\quad -\overbrace{\boldsymbol \Lambda \boldsymbol \lambda_3 \Delta(\mathbf{e}) -\sigma\sig^{v}(\mathbf{s}(t))}^{\text{Finite-time action}},
\end{split}
\end{align}
where {\color{black}$\mathbf{s}$ in \eqref{surface} was split to highlight} $\mathbf{s}_1 = \dot{\mathbf e} + \boldsymbol \lambda_1 \mathbf{e} + \boldsymbol \lambda_2\int_0^t \mathbf e(\tau)d\tau$.  {\color{black}Similar to the AISMC in \cite{9715173}, (\ref{PID}) allows to augment 
existing PID loops. The novelty is to include 
 a finite-time action to better control  convergence. The effectiveness of this method will be verified in Sect. \ref{Sect: Validation} with comparative experiments with PID and AISMC loops.\vspace{-0.1cm}}
\end{remark}
}


\begin{figure}[!t]
    \vspace{-0.2cm}\hspace{-0.6cm}\includegraphics[width=0.54\textwidth]{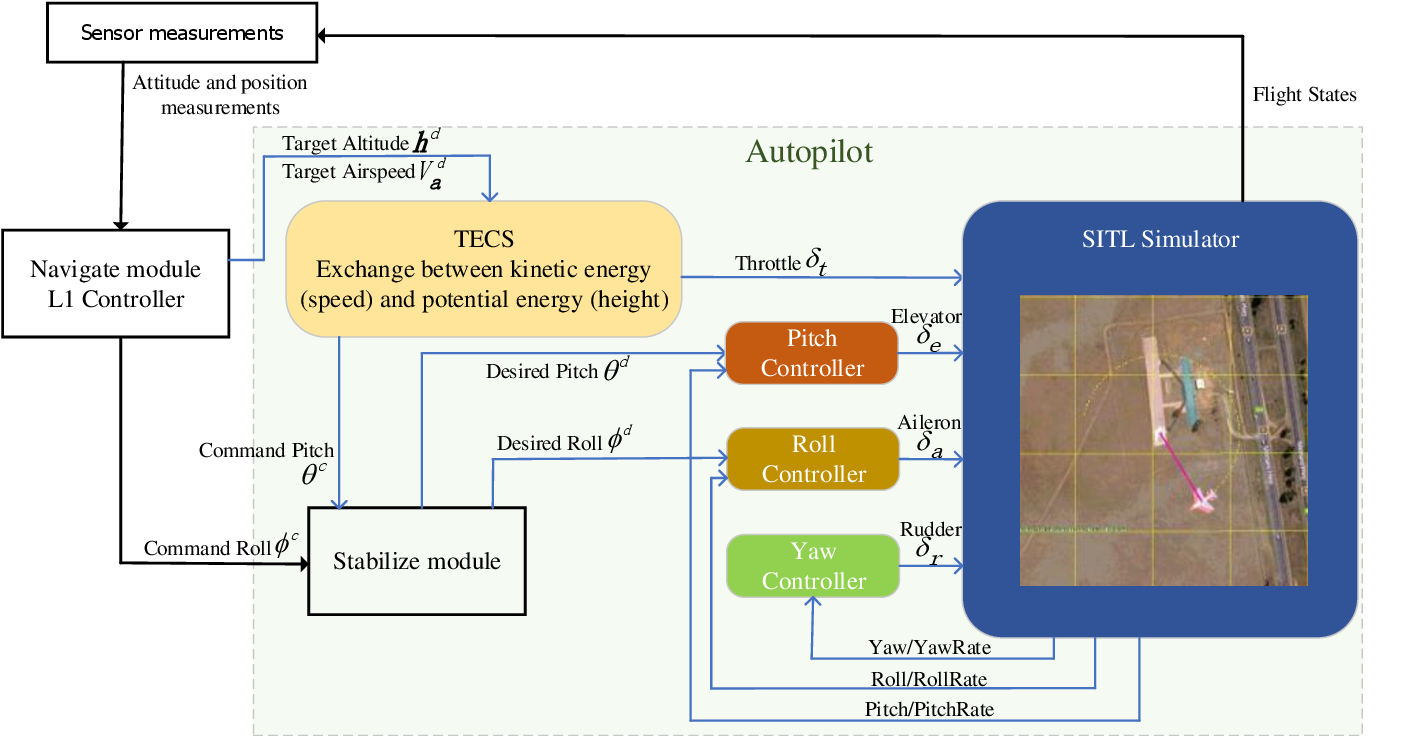}
    \caption{SITL architecture in ArduPilot: {\color{black}the Navigate module 
    sends guidance commands to the TECS and Stabilize modules; 
    roll-pitch-yaw controllers act on the corresponding servos.}\vspace{-0.2cm}}
    \label{SITL}
\end{figure}

\section{{\color{black}Validation in ArduPilot}} \label{Sect: Validation}
We validate the proposed framework into {\color{black}ArduPlane, the autopilot module of ArduPilot for fixed-wing unmanned aerial vehicles.} PID control 
and the authors' previous AISMC work  \cite{9715173} are used for comparisons, where 
\begin{align}\label{AISMC}
\begin{split}
    \mathbf u_{\rm AISMC}(t) &= -\boldsymbol \Lambda \mathbf{s}_1(t) - \rho \sat({\mathbf{s}_1(t)/\mu}), \\
    \rho(t) &= \bar{\theta}_0(t) + \bar{\theta}_1(t)||\boldsymbol \xi(t)|| + \bar{\theta}_2(t)||\boldsymbol \xi(t)||^2,  \\
    \dot{\bar{\theta}}_i(t) &= \|\mathbf{s}_1(t)\|\|\boldsymbol \xi(t)\|^i-\beta_i \bar{\theta}_i(t),
\end{split}
\end{align}
with $\mathbf{s}_1(t)$ being the PID term defined after (\ref{PID}) and $\sat(\cdot/\mu)$, $\mu>0$ the standard saturation function in $\pm 1$. 
In all the tests we report, we have tuned the gains 
to give the best possible performance for PID
while we have tuned the gains in the adaptive law \eqref{AISMC} to give the best possible performance for AISMC. To make the comparisons as fair as possible, 
\begin{itemize}
\item the same PID gains $\boldsymbol \Lambda$, $\boldsymbol \lambda_1$, $\boldsymbol \lambda_2$ are also used in all other methods;
\item the same adaptive gains $\beta_i$ in AISMC are also used in the proposed method.
\end{itemize}
By doing this, we are able to evaluate if and how much the additional terms in each strategy improve the performance. In AISMC and the proposed method, we use the term $\sat(\cdot /\mu)$ to replace the corresponding sign function, to allow continuity of the 
control action that is of practical importance. 

{\color{black}The validations adopt a software-in-the-loop (SITL)  environment, meaning that we test the actual open-source ArduPilot suite \cite{ardupilotCode}. Fig. \ref{SITL} illustrates the  SITL architecture.} For control, ArduPilot relies on a philosophy that aims to approximate the vehicle dynamics with second-order dynamics, and close them with PID loops \cite{ardupilotCode}. Such a  philosophy is supported by standard literature like \cite{UAVcraft}, which we explain as follows.

\begin{table}[!t]
\caption{Parameter selection for the proposed method}
\label{table:ParametersUAV}
\centering
    \begin{tabular}{| p{1.57cm} | p{0.48cm} | p{0.15cm} | p{0.48cm} | c | p{0.48cm} | p{0.35cm} | p{0.45cm} |}
        \hline
        Loop & $\gamma$ & $\varepsilon$ & $\lambda_3$ & $\beta_i$ & $\mu$ & $v$ & $\sigma$  \\
          & & & & \hspace{-0.1cm}{\scriptsize$ i \hspace{-0.05cm}=\hspace{-0.05cm} 0,\hspace{-0.02cm}1,\hspace{-0.02cm}2$}\hspace{-0.1cm} & & &   \\
        \hline
        pitch & $10^{-3}$ & 10 & $10^{-3}$ & 10 & 420 & 0.8 & 3.6  \\
        roll  & $10^{-3}$ & 10 & $10^{-3}$ & 160 & 1000 & 0.95 & 0.001   \\
        yaw   & $10^{-3}$ & 10 & $10^{-3}$ & 0.001 & 0.001 & 0.5 & 1.4  \\
        TECS throttle  & $10^{-3}$ & 10 & $10^{-3}$ & 0.001 & 10.4 & 0.01 & 0.04 \\
        TECS pitch  & $10^{-3}$ & 10 & $10^{-3}$ & 4 & 44 & 0.47 & 0.05  \\
        \hline
    \end{tabular}
\end{table}

\begin{table}[!b]
\caption{Tracking error costs for proposed, original PID and AISMC autopilots. The percentage degradation with respect to the proposed method is indicated.}
\label{table:UAV cost}
\vspace{-0.2cm}
\centering
    \begin{tabular}{p{1.0cm} p{0.4cm} p{0.4cm} p{0.4cm} p{0.9cm} p{0.9cm} p{1.8cm}}
        \hline
        \hline
        \multirow{2}{*}{Mass} & \multicolumn{6}{c}{Proposed autopilot} \\
        \cline{2-7}
        & Roll & Pitch & Yaw & TECS throttle & TECS pitch & Total\\
        \hline
        2 $\rightarrow$ 1kg & 1.34 & 0.99 & 5.31 & 2.75 & 2.55 & \textbf{12.94} \\
        \hline
        2kg & 1.01 & 1.01 & 0.49 & 0.51 & 0.65 & \textbf{3.67} \\
        \hline
        2 $\rightarrow$ 4kg & 0.8 & 1.07 & 3.52 & 1.73 & 2.55 & \textbf{9.67} \\
        \hline
        \hline
        \multirow{2}{*}{Mass} & \multicolumn{6}{c}{Original PID autopilot} \\
        \cline{2-7}
        & Roll & Pitch & Yaw & TECS throttle & TECS pitch & Total  $\qquad \ \qquad \qquad$ $\qquad \  $ $\left.\right.\left.\right.\left.\right.\left.\right.\left.\right.$ \\ 
        \hline
        2 $\rightarrow$ 1kg & 1.38 & 0.88 & 5.68 & 10.33 & 3.09 & 21.36  (+65.1\%) \\
        \hline
        2kg & 1 & 1 & 1 & 1 & 1 & 5.0 $ \ \  $ (+36.2\%) \\
        \hline
        2 $\rightarrow$ 4kg & 0.72 & 1.3 & 4.0 & 5.43 & 5.9 & 17.35  (+79.4\%) \\
        \hline
        \hline
        \multirow{2}{*}{Mass} & \multicolumn{6}{c}{AISMC autopilot} \\
        \cline{2-7}
        & Roll & Pitch & Yaw & TECS throttle & TECS pitch & Total  $\qquad \ \qquad \qquad$ $\qquad \  $ $\left.\right.\left.\right.\left.\right.\left.\right.\left.\right.$ \\ 
        \hline
        2 $\rightarrow$ 1kg & 1.47 & 0.94 & 5.44 & 4.94 & 2.97 & 15.76  (+21.8\%) \\
        \hline
        2kg & 1 & 1 & 0.49 & 0.73 & 0.92 & 4.14 $\ $ (+12.8\%) \\
        \hline
        2 $\rightarrow$ 4kg & 0.78 & 1.17 & 3.6 & 2.55 & 3.9 & 12.0 $\ $ (+24.1\%) \\
        \hline
        \hline
    \end{tabular}
\end{table}

\begin{figure*}[!t]
 \centering
\begin{subfigure}[t]{.48\linewidth}    \includegraphics[width=1.05\textwidth]{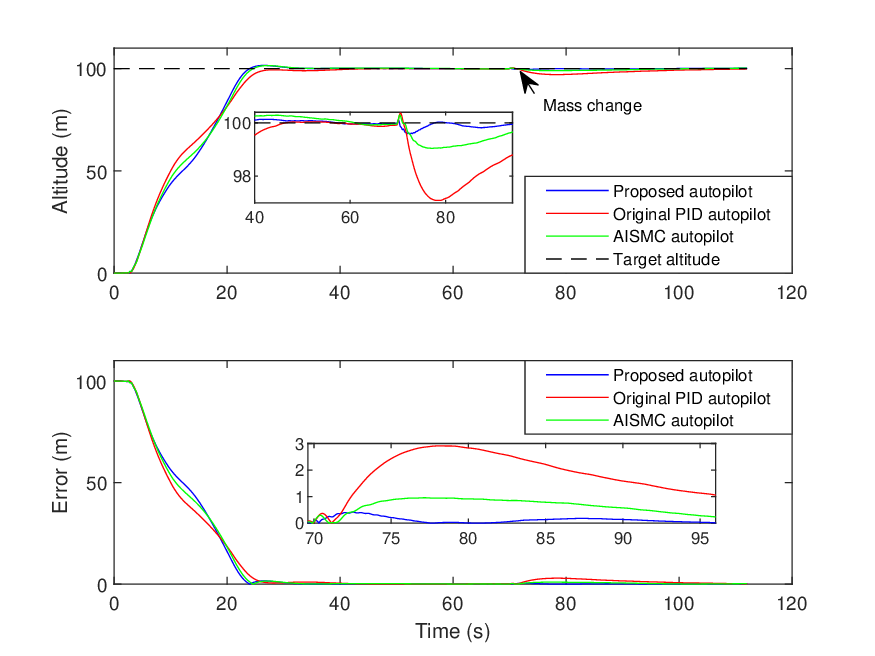}
    \centering
    \caption{{\color{black} Altitude and altitude error norm. The proposed solution \newline has negligible altitude drop after mass change.}}
    \label{fig:height1}
\end{subfigure}
\begin{subfigure}[t]{.48\linewidth}	
\centering
    \includegraphics[width=1.05\textwidth]{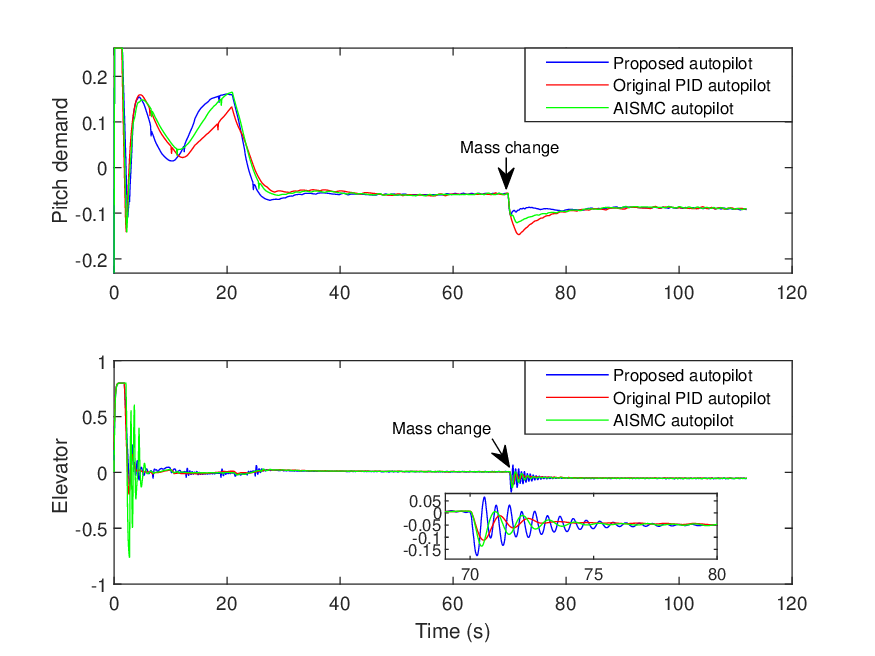}
    \centering
    \caption{{\color{black} TECS Pitch demand and elevator. The proposed solution  is more \newline reactive to mass change,  with elevator input in reasonable range.}}
    \label{fig:pitch_elevator1}
\end{subfigure}
\begin{subfigure}[t]{.48\linewidth}	
\centering
\end{subfigure}
\begin{subfigure}[t]{.48\linewidth}
\includegraphics[width=1.05\textwidth]{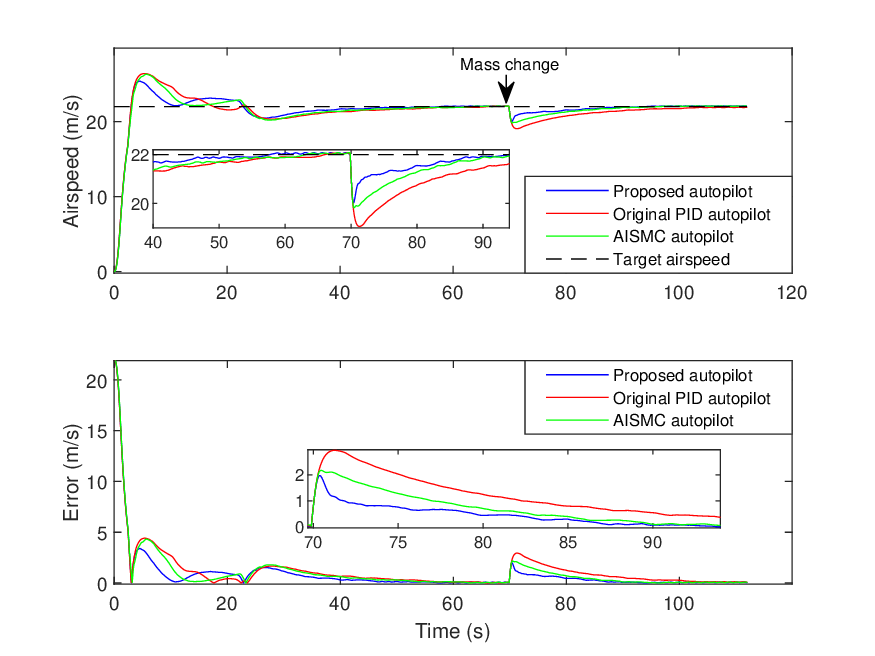}
    \centering
    \caption{Airspeed and airspeed error norm. The proposed solution has \newline smaller overshoot with faster convergence after mass change.}
    \label{fig:Va1}
\end{subfigure}
\begin{subfigure}[t]{.48\linewidth}	
\centering 
    \includegraphics[width=1.05\textwidth]{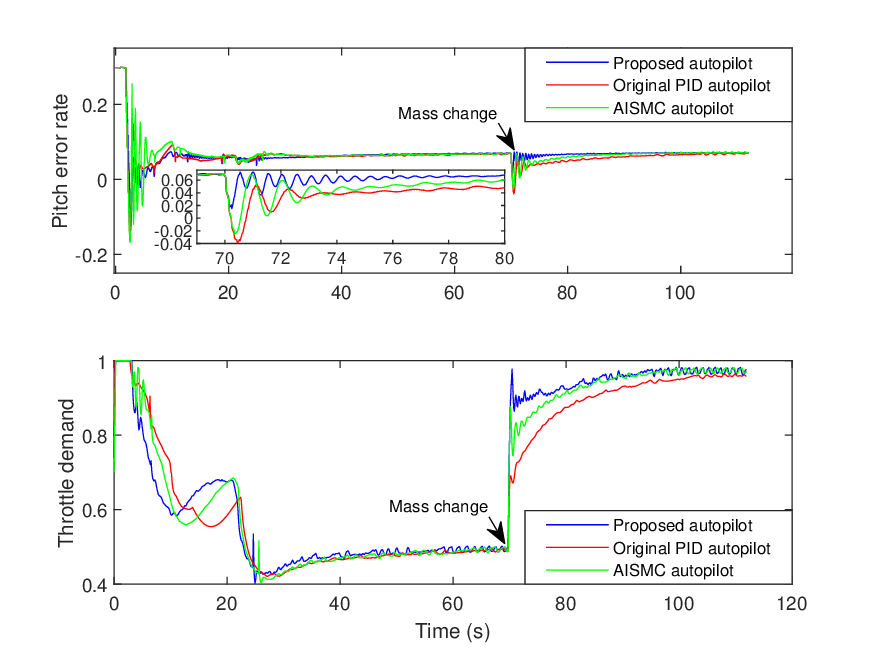}
    \centering
    \caption{{\color{black} Pitch error rate and TECS throttle. The proposed solution is \newline more  reactive to mass change.}}
    \label{fig:throttle_pitcherror1}
\end{subfigure}
\caption{Mass change 2.0kg $\rightarrow$ 1.0kg: comparison of proposed, original PID and AISMC \cite{9715173} autopilots.\vspace{-0.1cm}}
\label{fig:uav1}
\end{figure*}

It is well known that the 6-DOF fixed-wing dynamics 
are coupled and complex \cite[Chapt. 3]{UAVcraft}. So, several literature suggests to simplify the control design via low-order models (refer to \cite[Chapt. 5]{UAVcraft} for a details). For compactness, let us use the transfer function notation to describe the relation between the elevator deflection $\delta_e$ and the pitch angle $\theta$, between the pitch angle $\theta$ and the altitude $h$, and between the throttle $\delta_t$ and pitch angle $\theta$ to the airspeed $V_a$. These are the main components of the longitudinal dynamics: 
\begin{align} \label{longitudinal}
    &\text{pitch} \ \ \ \ \ \  \theta(s) =  \frac{a_{\theta3}}{s^2+a_{\theta1}s+a_{\theta2}}(\delta_e(s)+\frac{1}{a_{\theta3}}d_{\theta2}(s)), \nonumber \\
    &\text{altitude} \ \ \  h(s) = \frac{V_a}{s}(\theta(s) + \frac{1}{V_a}d_h(s)), \\
    &\text{airspeed} \ \  \bar{V}_a(s) = \frac{1}{s+a_{v_1}}(a_{v_2}\bar{\delta}_t(s) - a_{v_3}\bar{\theta}(s) + d_v(s)), \nonumber
\end{align}
where $a_{\theta1} \triangleq -\frac{\rho_a V_a^2 cS}{2J_y}C_{m_q} \frac{c}{2V_a},$ 
\begin{align*}
    a_{\theta2} &\triangleq -\frac{\rho_a V_a^2cS}{2J_y}C_{m_\alpha}, \quad
    a_{\theta3} \triangleq \frac{\rho_a V_a^2 cS}{2J_y}C_{m_{\delta_e}}, \\
    d_{\theta2} &\triangleq \hspace{-0.08cm} \Gamma_6(\hspace{-0.04cm}r^2 \hspace{-0.18cm}-\hspace{-0.08cm}p^2) \hspace{-0.09cm}+\hspace{-0.09cm} \Gamma_5pr \hspace{-0.09cm}+\hspace{-0.09cm}  \frac{\rho_a V_a^2 cS}{2J_y}[C_{m_0}\hspace{-0.15cm}-\hspace{-0.1cm} C_{m_\alpha}\hspace{-0.1cm} \gamma \hspace{-0.09cm}-\hspace{-0.09cm} C_{m_q} \frac{c}{2V_a}d_{\theta1}] \hspace{-0.09cm}+\hspace{-0.09cm} \dot{d}_{\theta1}, \\
    d_h &\triangleq (u \sin{\theta} - V_a \theta) - v \sin{\phi} \cos{\theta} -w \cos{\phi} \cos{\theta}, \\
    \bar{V}_a &\triangleq V_a - V_a^* \ \text{is the deviation of } V_a \ \text{from trim } V_a^*, \\
    \bar{\theta} &\triangleq \theta - \theta^* \hspace{-0.03cm} \ \quad \text{is the deviation of } \theta \ \text{from trim } \theta^*, \\
    \bar{\delta}_t &\triangleq \delta_t - \delta_t^* \hspace{-0.01cm} \quad \text{is the deviation of } \delta_t \ \text{from trim } \delta_t^*, \\
    a_{v_1} &\triangleq \frac{\rho V_a^*S}{m_u}(C_{D_0}+C_{D_{\alpha^*}}+C_{D_{\delta_e}}\delta_e^*) + \frac{\rho S_{\text{prop}}}{m_u} C_{\text{prop}}V_a^*, \\
    a_{v_2} &\triangleq \frac{\rho S_{\text{prop}}}{m_u} C_{\text{prop}} k_{\text{motor}}^2 \delta_t^*, \quad
    a_{v_3} \triangleq g \cos{(\theta^* - \chi^*)},
\end{align*}
where $\phi$ is the Euler roll angle, $p$, $r$ are the roll rate and yaw rate in body frame, $\gamma$ is flight path angle, $\chi^*$ is the course angle trim and the terms $C_{m_q},  C_{m_\alpha}, C_{m_{\delta_e}}, C_{m_0}, C_{D_0}, C_{D_{\alpha^*}}$,  $C_{D_{\delta_e}}$ are Taylor approximations of aerodynamic coefficients. We do not report the lateral dynamics for compactness,  the interested reader is referred to \cite[Chapt. 5]{UAVcraft}.  It is worth remarking that most coefficients in (\ref{longitudinal}) are uncertain and {\color{black}difficult to identify, as they might  change with time and with the operating conditions.} Additionally,   $d_{\theta2}$, $d_h$, $d_v$ represent complex state-dependent disturbances coming from the coupled dynamics. {\color{black}These issues make model-based or structure-based autopilots hard to implement in practice: meanwhile,  {\color{black}non-adaptive PID loops may guarantee practical stability, but their guarantees in the presence of state-dependent uncertainty may be conservative}, as shown later in our tests.}

{\color{black}Since ArduPilot relies on closing the vehicle dynamics with PID loops \cite{ardupilotCode}, by recalling 
\eqref{PID}, we have the opportunity to test the proposed  method by augmenting the original loops of ArduPilot.} 
Indeed, the attitude control modules in Fig. \ref{SITL} can either contain the original ArduPilot autopilot, or a user-designed autopilot (e.g., the AISMC \cite{9715173} or the approach proposed in this work). As the original ArduPilot consists of a family of PID loops, one can choose  $\boldsymbol \Lambda$, $\boldsymbol \lambda_1$, $\boldsymbol \lambda_2$ in (\ref{controller}) as the PID gains set in the ArduPilot code \cite{ardupilotCode}. 
The other parameters for the proposed approach are listed in Table \ref{table:ParametersUAV}. 

\begin{figure*}[!t]
 \centering
\begin{subfigure}[t]{.48\linewidth}    \includegraphics[width=1.05\textwidth]{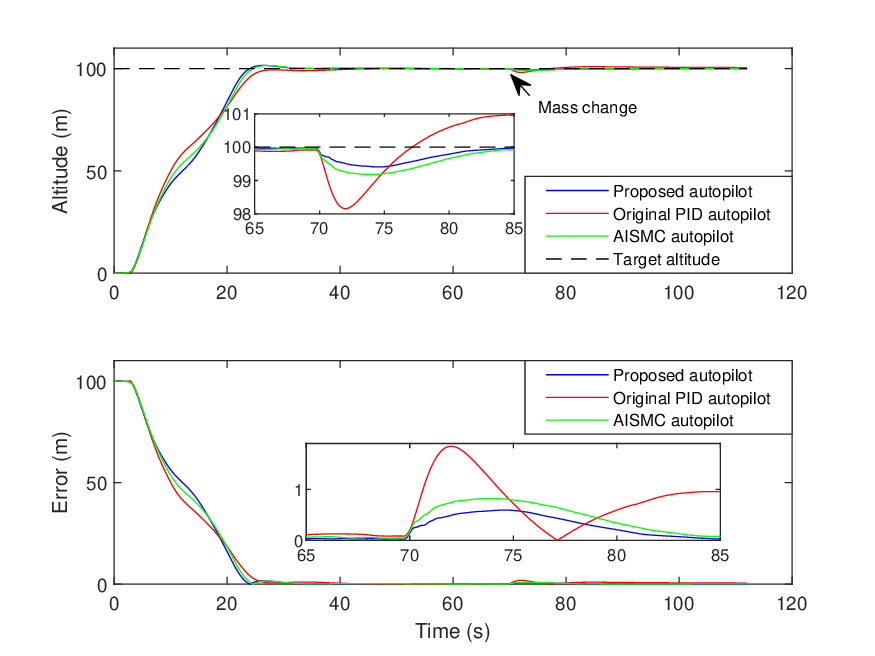}
    \centering
    \caption{{\color{black} Altitude and altitude error norm. The proposed solution \newline has negligible altitude drop after mass change.}}
    \label{fig:height4}
\end{subfigure}
\begin{subfigure}[t]{.48\linewidth}	
\centering
    \includegraphics[width=1.05\textwidth]{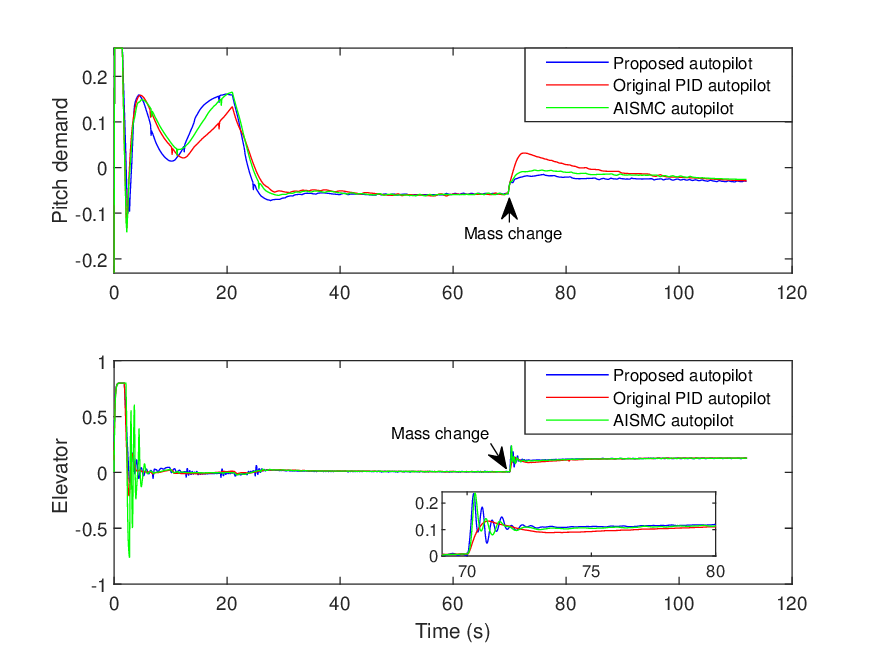}
    \centering
    \caption{{\color{black} TECS pitch demand and elevator. The proposed solution is more \newline  reactive to mass change, with elevator input in reasonable range.}}
    \label{fig:pitch_elevator4}
\end{subfigure}
\begin{subfigure}[t]{.48\linewidth}	
\centering
\end{subfigure}
\begin{subfigure}[t]{.48\linewidth}
\includegraphics[width=1.05\textwidth]{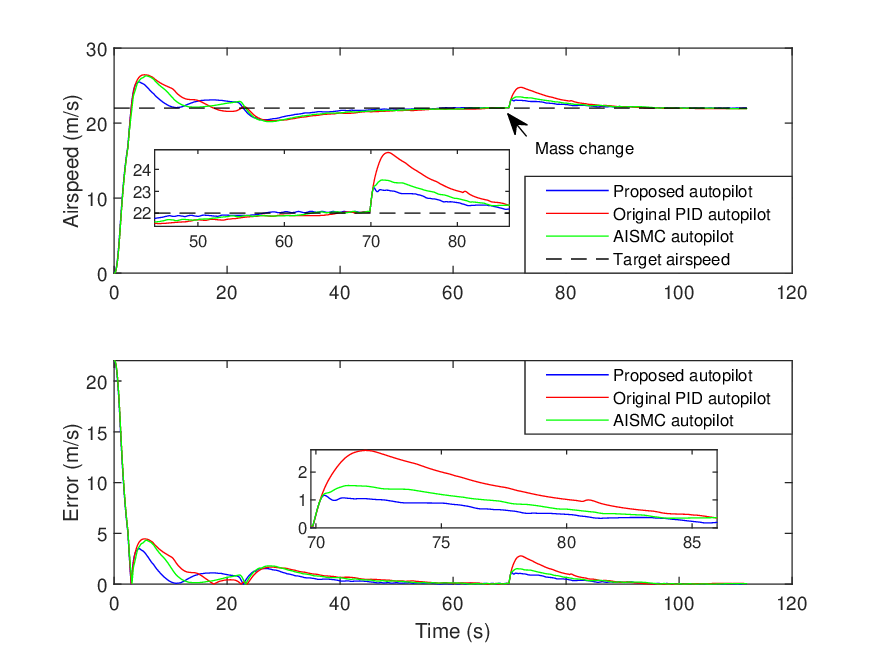}
    \centering
    \caption{Airspeed and airspeed error norm. The proposed solution has \newline smaller overshoot with faster convergence after mass change.}
    \label{fig:Va4}
\end{subfigure}
\begin{subfigure}[t]{.48\linewidth}	
\centering 
    \includegraphics[width=1.05\textwidth]{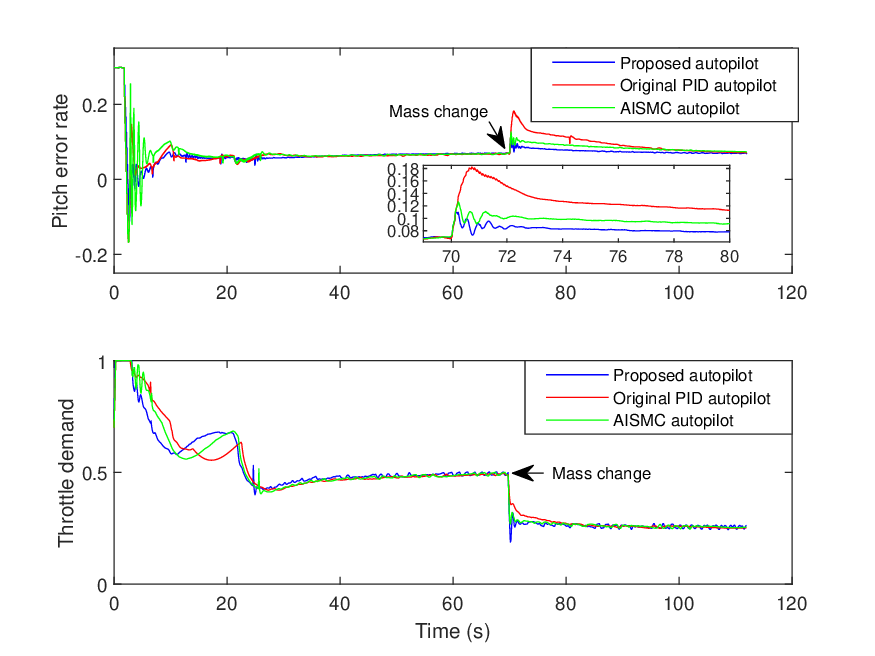}
    \centering
    \caption{{\color{black} Pitch error rate and TECS throttle. The proposed solution  is \newline more  reactive to mass change.}}
    \label{fig:throttle_pitcherror4}
\end{subfigure}
\caption{Mass change 2.0kg $\rightarrow$ 4.0kg: comparison of proposed, original PID and AISMC \cite{9715173} autopilots.\vspace{-0.1cm}}
\label{fig:uav4}
\end{figure*}

To evaluate the robustness of the proposed autopilot under large uncertainty, the mass of the UAV will be changed during the flight (which can represent some change in the payload). The initial mass is $2$kg, and it can change to half mass ($1$kg) or double mass ($4$kg) during flight. Neither the mass nor the mass change is known a priori, which allows to test how different autopilots cope with this uncertainty. The flight includes a take-off phase, a cruising phase (orbiting  constant altitude and constant airspeed), and a landing phase. Note that the mass change occurs during the cruising phase. Table \ref{table:UAV cost} reports the tracking error costs for the different autopilots under different mass conditions. Five control loops are reported, representing the main loops of a fixed-wing UAV: roll, pitch, yaw, TECS throttle and TECS pitch. The term TECS means total energy control system, i.e. the control loops for altitude (potential energy) and speed (kinetic energy). The cost accounts for the  tracking errors with respect to the desired roll, pitch, yaw, potential and kinetic energy. The performance of the original PID autopilot with constant 2kg mass is used as a normalizing factor, so that the cost is 5.0 in this scenario. The table shows that the tracking error costs of the proposed autopilot are smaller than other autopilots under both constant mass scenario and changing mass scenario: the proposed autopilot  overcomes the original autopilot by more than 36.2\% for the original PID autpilot and by more than 12.8\% for the AISMC autopilot. The enhanced performance is especially evident in the changing mass scenario.

To visualize the performance in the changing mass scenario,  relevant flight variables are visualized in Figs. \ref{fig:uav1} and \ref{fig:uav4} for the proposed, the original PID and the AISMC autopilots. As shown in Figs. \ref{fig:height1} and \ref{fig:height4}, the proposed autopilot has negligible altitude drop after mass change. Figs. \ref{fig:pitch_elevator1} and \ref{fig:pitch_elevator4} show the faster reaction in pitch demand, with reasonable elevator input. Figs. \ref{fig:Va1} and \ref{fig:Va4} show that the airspeed of the proposed autopilot has smaller overshoot and converges faster. As shown in Figs. \ref{fig:throttle_pitcherror1} and \ref{fig:throttle_pitcherror4}, the 
throttle for the proposed solution is more reactive in response to the mass change.

To conclude, the tests show that the proposed method overcomes the other ones in terms of robustness and adaptation, {\color{black}suggesting  more effectiveness in tackling state-dependent uncertainty typically arising in autopilot
applications.} {\color{black}Additional comparisons can be found in the report \cite{TechReportandCode}.}

\section{Conclusions and future work}
This paper explored a novel boundary-layer Euler-Lagrange (EL) control method not relying on structural knowledge of the system dynamics. 
{\color{black}
Design considerations and software-in-the-loop tests have shown the capability of the proposed solution to be integrated in existing autopilot loops, with enhanced robustness and adaptation performance}. Future work will aim to test several modules in open-source autopilots meant for vehicles with different structure (copters, vessels, rovers, etc.). A preliminary study in this sense is \cite{9993292}. {\color{black}Another unsettled problem in the field of unstructured uncertainties is to consider unmatched/unactuated terms.} 

\bibliographystyle{IEEEtran}
\bibliography{references}

@ARTICLE{9165001,
  author={Verginis, Christos K. and Dimarogonas, Dimos V.},
  journal={IEEE Transactions on Automatic Control}, 
  title={Asymptotic Tracking of Second-Order Nonsmooth Feedback Stabilizable Unknown Systems With Prescribed Transient Response}, 
  year={2021},
  volume={66},
  number={7},
  pages={3296-3302},
  doi={10.1109/TAC.2020.3015785}}

@INPROCEEDINGS{9030217,
  author={Verginis, Christos K. and Dimarogonas, Dimos V.},
  booktitle={2019 IEEE 58th Conference on Decision and Control (CDC)}, 
  title={Asymptotic Stability of Uncertain Lagrangian Systems with Prescribed Transient Response}, 
  year={2019},
  volume={},
  number={},
  pages={7037-7042},
  doi={10.1109/CDC40024.2019.9030217}}

@INPROCEEDINGS{9993292,
  author={Li, Peng and Liu, Di and Xia, Xin and Baldi, Simone},
  booktitle={2022 IEEE 61st Conference on Decision and Control (CDC)}, 
  title={Embedding Adaptive Features in the ArduPilot Control Architecture for Unmanned Aerial Vehicles}, 
  year={2022},
  volume={},
  number={},
  pages={3773-3780},
  doi={10.1109/CDC51059.2022.9993292}}

@article{https://doi.org/10.1002/rnc.3562,
author = {Safonov, Michael G.},
title = {Robust control: Fooled by assumptions},
journal = {International Journal of Robust and Nonlinear Control},
volume = {28},
number = {12},
pages = {3667-3677},
doi = {https://doi.org/10.1002/rnc.3562},
year = {2018}
}

@ARTICLE{9525201,
  author={Cruz-Ancona, Christopher D. and Estrada, Manuel A. and Fridman, Leonid},
  journal={IEEE Transactions on Automatic Control}, 
  title={Barrier Function-Based Adaptive Lyapunov Redesign for Systems Without A Priori Bounded Perturbations}, 
  year={2022},
  volume={67},
  number={8},
  pages={3851-3862},
  doi={10.1109/TAC.2021.3107453}}

@ARTICLE{9795675,
  author={Haimovich, Hernan and Fridman, Leonid and Moreno, Jaime A.},
  journal={IEEE Transactions on Automatic Control}, 
  title={Generalized Super-Twisting for Control Under Time- and State-Dependent Perturbations: Breaking the Algebraic Loop}, 
  year={2022},
  volume={67},
  number={10},
  pages={5646-5652},
  doi={10.1109/TAC.2022.3183039}}

@ARTICLE{9271817,
  author={Yu, Xinghuo and Feng, Yong and Man, Zhihong},
  journal={IEEE Open Journal of the Industrial Electronics Society}, 
  title={Terminal Sliding Mode Control – An Overview}, 
  year={2021},
  volume={2},
  number={},
  pages={36-52},
  doi={10.1109/OJIES.2020.3040412}}

@article{doi:10.1080/00207179.2010.501385,
author = { F.   Plestan  and  Y.   Shtessel  and  V.   Brégeault  and  A.   Poznyak },
title = {New methodologies for adaptive sliding mode control},
journal = {International Journal of Control},
volume = {83},
number = {9},
pages = {1907-1919},
year  = {2010},
doi = {10.1080/00207179.2010.501385},
}

@article{doi:10.1080/00207179.2016.1138241,
author = {Gian Paolo Incremona and Michele Cucuzzella and Antonella Ferrara},
title = {Adaptive suboptimal second-order sliding mode control for microgrids},
journal = {International Journal of Control},
volume = {89},
number = {9},
pages = {1849-1867},
year  = {2016},
doi = {10.1080/00207179.2016.1138241},
}

@ARTICLE{9629337,  author={Dong, Haotian and Xi, Junqiang},  journal={IEEE Transactions on Vehicular Technology},   title={Model Predictive Longitudinal Motion Control for the Unmanned Ground Vehicle With a Trajectory Tracking Model},   year={2022},  volume={71},  number={2},  pages={1397-1410},  doi={10.1109/TVT.2021.3131314}}

@article{VONELLEN102750,
title = {Shared human–robot path following control of an unmanned ground vehicle},
journal = {Mechatronics},
volume = {83},
pages = {102750},
year = {2022},
issn = {0957-4158},
doi = {https://doi.org/10.1016/j.mechatronics.2022.102750},
author = {Karl D. {von Ellenrieder} and Stephen C. Licht and Roberto Belotti and Helen C. Henninger}
}

@article{ZUO2015305,
title = {Nonsingular fixed-time consensus tracking for second-order multi-agent networks},
journal = {Automatica},
volume = {54},
pages = {305-309},
year = {2015},
issn = {0005-1098},
doi = {https://doi.org/10.1016/j.automatica.2015.01.021},
author = {Zongyu Zuo},
}

@ARTICLE{7959105,  author={Wang, Fang and Chen, Bing and Lin, Chong and Zhang, Jing and Meng, Xinzhu},  journal={IEEE Transactions on Cybernetics},   title={Adaptive Neural Network Finite-Time Output Feedback Control of Quantized Nonlinear Systems},   year={2018},  volume={48},  number={6},  pages={1839-1848},  doi={10.1109/TCYB.2017.2715980}}

@article{2018Adaptive,
  title={Adaptive fixed-time bipartite tracking consensus control for unknown nonlinear multi-agent systems: An information classification mechanism},
  author={ Yang, H.  and  Ye, D. },
  journal={Information Sciences},
  pages={238-254},
  year={2018},
}

@article{2020Adaptive,
  title={Adaptive Tracking Control of an Electronic Throttle Valve Based on Recursive Terminal Sliding Mode},
  author={ Hu, Youhao  and  Wang, Hai  and  He, Shuping  and  Zheng, Jinchuan  and  Man, Zhihong },
  journal={IEEE Transactions on Vehicular Technology},
  volume={1},
  number={1},
  pages={1-11},
  year={2020},
}

@ARTICLE{8693689,  author={Santoso, Fendy and Garratt, Matthew A. and Anavatti, Sreenatha G.},  journal={IEEE Transactions on Systems, Man, and Cybernetics: Systems},   title={Hybrid {PD}-Fuzzy and {PD} Controllers for Trajectory Tracking of a Quadrotor Unmanned Aerial Vehicle: Autopilot Designs and Real-Time Flight Tests},   year={2021},  volume={51},  number={3},  pages={1817-1829},  doi={10.1109/TSMC.2019.2906320}}

@ARTICLE{9703092,  author={Baldi, Simone and Roy, Spandan and Yang, Kang and Liu, Di},  journal={IEEE/ASME Transactions on Mechatronics},   title={An Underactuated Control System Design for Adaptive Autopilot of Fixed-Wing Drones},   year={2022},  volume={},  number={},  pages={1-12},  doi={10.1109/TMECH.2022.3144459}}

@INPROCEEDINGS{9589106,  author={Li, Peng and Liu, Di and Baldi, Simone},  booktitle={IECON 2021 – 47th Annual Conference of the IEEE Industrial Electronics Society},   title={Plug-and-play adaptation in autopilot architectures for unmanned aerial vehicles},   year={2021},  volume={},  number={},  pages={1-6},  doi={10.1109/IECON48115.2021.9589106}}

@article{2006Adaptive,
  title={Adaptive fuzzy control for the ship steering problem},
  author={ Gerasimos Rigatos  and  Spyros Tzafestas },
  journal={Mechatronics},
  volume={16},
  number={ 8},
  pages={479-489},
  year={2006},
}

@ARTICLE{9369849,  author={Rayguru, Madan Mohan and Mohan, Rajesh Elara and Parween, Rizuwana and Yi, Lim and Le, Anh Vu and Roy, Spandan},  journal={IEEE/ASME Transactions on Mechatronics},   title={An Output Feedback Based Robust Saturated Controller Design for Pavement Sweeping Self-Reconfigurable Robot},   year={2021},  volume={26},  number={3},  pages={1236-1247},  doi={10.1109/TMECH.2021.3063886}}

@ARTICLE{8438315,  author={Kürkçü, Burak and Kasnakoğlu, Coşku},  journal={IEEE Transactions on Control Systems Technology},   title={Robust Autopilot Design Based on a Disturbance/Uncertainty/Coupling Estimator},   year={2019},  volume={27},  number={6},  pages={2622-2629},  doi={10.1109/TCST.2018.2859179}}

@ARTICLE{8304792,  author={Fu, Changhong and Sarabakha, Andriy and Kayacan, Erdal and Wagner, Christian and John, Robert and Garibaldi, Jonathan M.},  journal={IEEE/ASME Transactions on Mechatronics},   title={Input Uncertainty Sensitivity Enhanced Nonsingleton Fuzzy Logic Controllers for Long-Term Navigation of Quadrotor {UAVs}},   year={2018},  volume={23},  number={2},  pages={725-734},  doi={10.1109/TMECH.2018.2810947}}

@ARTICLE{9570132,  author={Yang, Zeyu and Huang, Jin and Yin, Hui and Yang, Diange and Zhong, Zhihua},  journal={IEEE Transactions on Intelligent Transportation Systems},   title={Path Tracking Control for Underactuated Vehicles With Matched-Mismatched Uncertainties: An Uncertainty Decomposition Based Constraint-Following Approach},   year={2021},  volume={},  number={},  pages={1-14},  doi={10.1109/TITS.2021.3118375}}

@ARTICLE{9325954,  author={Roy, Spandan and Baldi, Simone and Li, Peng and Narayanan, Viswa},  journal={IEEE/ASME Transactions on Mechatronics},   title={Artificial-Delay Adaptive Control for Under-actuated {Euler-Lagrange} Robotics},   year={2021},  volume={},  number={},  pages={1-1},  doi={10.1109/TMECH.2021.3052068}}

@article{2016Adaptive,
  title={Adaptive Continuous Higher Order Sliding Mode Control},
  author={ Edwards, Christopher  and  Shtessel, Yuri B. },
  journal={Automatica},
  volume={65},
  pages={183-190},
  year={2016},
}

@ARTICLE{7859476,  author={Nuño, Emmanuel and Ortega, Romeo},  journal={IEEE Transactions on Control Systems Technology},   title={Achieving Consensus of {Euler–Lagrange} Agents With Interconnecting Delays and Without Velocity Measurements via Passivity-Based Control},   year={2018},  volume={26},  number={1},  pages={222-232},  doi={10.1109/TCST.2017.2661822}}

@ARTICLE{9570130,  author={Yu, Li and He, Guang and Wang, Xiangke and Zhao, Shulong},  journal={IEEE Transactions on Industrial Electronics},   title={Robust fixed-time sliding mode attitude control of tilt tri-rotor {UAV} in Helicopter mode},   year={2021},  volume={},  number={},  pages={1-1},  doi={10.1109/TIE.2021.3118556}}

@ARTICLE{9512516,  author={Wan, Lucas and Pan, Ya-Jun and Shen, Henghua},  journal={IEEE/ASME Transactions on Mechatronics},   title={Improving Synchronization Performance of Multiple {Euler-Lagrange} Systems using Non-Singular Terminal Sliding Mode Control with Fuzzy Logic},   year={2021},  volume={},  number={},  pages={1-1},  doi={10.1109/TMECH.2021.3104504}}

@ARTICLE{8435993,  author={Ma, Carlos and Lam, James and Lewis, Frank L.},  journal={IEEE Transactions on Control Systems Technology},   title={Trajectory Regulating Model Reference Adaptive Controller for Robotic Systems},   year={2019},  volume={27},  number={6},  pages={2749-2756},  doi={10.1109/TCST.2018.2858203}}

@ARTICLE{9721593,  author={Souanef, Toufik},  journal={IEEE Transactions on Aerospace and Electronic Systems},   title={$\mathcal{L}_1$ Adaptive Path-Following of Small Fixed-wing Unmanned Aerial Vehicles in Wind},   year={2022},  volume={},  number={},  pages={1-1},  doi={10.1109/TAES.2022.3153758}}

@ARTICLE{9731672,
  author={Shao, Ke and Zheng, Jinchuan and Tang, Rongchuan and Li, Xiu and Man, Zhihong and Liang, Bin},
  journal={IEEE/ASME Transactions on Mechatronics}, 
  title={Barrier Function Based Adaptive Sliding Mode Control for Uncertain Systems With Input Saturation}, 
  year={2022},
  volume={27},
  number={6},
  pages={4258-4268},
  doi={10.1109/TMECH.2022.3153670}}

@ARTICLE{9001188,  author={S. {Roy} and J. {Lee} and S. {Baldi}},  journal={IEEE Transactions on Control Systems Technology},   title={A New Adaptive-Robust Design for Time Delay Control Under State-Dependent Stability Condition},   year={2021},  volume={29},  number={1},  pages={420-427},  doi={10.1109/TCST.2020.2969129}}

@article{2020Ona,
  title={On adaptive sliding mode control without a priori bounded uncertainty},
  author={ Roy, Spandan  and  Baldi, Simone  and  Fridman, Leonid M. },
  journal={Automatica},
  volume={111},
  pages={108650},
  year={2020},
}

@ARTICLE{7331625,  author={Wang, Ning and Qian, Chunjiang and Sun, Jing-Chao and Liu, Yan-Cheng},  journal={IEEE Transactions on Control Systems Technology},   title={Adaptive Robust Finite-Time Trajectory Tracking Control of Fully Actuated Marine Surface Vehicles},   year={2016},  volume={24},  number={4},  pages={1454-1462},  doi={10.1109/TCST.2015.2496585}}

@ARTICLE{9492051,  author={Mofid, Omid and Mobayen, Saleh},  journal={IEEE Transactions on Aerospace and Electronic Systems},   title={Adaptive finite-time back-stepping global sliding mode tracker of quad-rotor {UAVs} under model uncertainty, wind perturbation and input saturation},   year={2021},  volume={},  number={},  pages={1-1},  doi={10.1109/TAES.2021.3098168}}

@ARTICLE{9715173,
  author={Li, Peng and Liu, Di and Baldi, Simone},
  journal={IEEE/ASME Transactions on Mechatronics}, 
  title={Adaptive Integral Sliding Mode Control in the Presence of State-Dependent Uncertainty}, 
  year={2022},
  volume={27},
  number={5},
  pages={3885-3895},
  doi={10.1109/TMECH.2022.3145910}}

@book{robustadaptive,
author="Petros Ioannou and Jing Sun",
title="Robust Adaptive Control",
year="2012",
publisher="Dover Publications"
}

@Inbook{UAVcraft,
author="Randal W. Beard and Timothy W. Mclain",
title="Small Unmanned Aircraft Theory and Practice",
bookTitle="Small Unmanned Aircraft Theory and Practice",
year="2012",
publisher="Princeton University Press"
}

@misc{ardupilotCode,
title = {Open source for ArduPilot open source autopilot},
howpublished = {Online},
note = {\url{https://github.com/ArduPilot/ardupilot}}
}

@misc{TechReportandCode,
title = {Technical report on boundary-layer control with unstructured
uncertainties},
howpublished = {Online},
note = {\url{https://github.com/Friend-Peng/Adaptive-ArduPilot-Autopilot/tree/Boundary-layer-control-with-application-to-adaptive-autopilots}}
}

\end{document}